\newcommand*\rot{\rotatebox{90}}
\newcommand*\rotRight{\rotatebox{270}}
\let\oldnl\nl
\newcommand{\nonl}{\renewcommand{\nl}{\let\nl\oldnl}}
\newcommand{\UnaryOperator}[2][]{%
    \ifx&#1&%
    \ensuremath{\mathop{}\mathopen{}#2\mathopen{}}%
    \else%
    \ensuremath{\mathop{}\mathopen{}#2\mathopen{}\left(#1\right)}%
    \fi%
}
\newcommand{\UnaryArray}[2][]{%
  \ifx&#1&%
  \ensuremath{\mathop{}\mathopen{}#2\mathopen{}}%
  \else%
  \ensuremath{\mathop{}\mathopen{}#2\mathopen{}\lbrack#1\rbrack}%
  \fi%
}
\newcommand{\OpenInterval}[2][]{%
  \ifx&#1&%
  \ensuremath{\mathop{}\mathopen{}#2\mathopen{}}%
  \else%
  \ensuremath{\mathop{}\mathopen{}#2\mathopen{}[#1)}%
  \fi%
}
\newcommand{\SA}[1]{\UnaryArray[#1]{\mathsf{SA}}}
\newcommand{\LCP}[1]{\UnaryArray[#1]{\mathsf{LCP}}}
\newcommand{\lcp}[2]{\ensuremath{\mathop{\mathsf{lcp}}\left( #1,#2\right)}}
\newcommand{\RMQ}[2]{\UnaryArray[#1]{\mathsf{RMQ}_{#2}}}
\newcommand{\Text}[1]{\UnaryArray[#1]{\mathsf{T}}}
\newcommand{\Substring}[1]{\OpenInterval[#1]{\mathsf{T}}}
\newcommand{\ISA}[1]{\UnaryArray[#1]{\mathsf{ISA}}}
\newcommand{\PHI}[1]{\UnaryArray[#1]{\mathsf{PHI}}}
\newcommand{\DELTA}[1]{\UnaryArray[#1]{\mathsf{DELTA}}}
\newcommand{\buf}[1]{\UnaryArray[#1]{\mathsf{buf}}}
\newcommand{\PAb}[1]{\UnaryArray[#1]{\mathsf{PAb}}}
\newcommand{\ISAb}[1]{\UnaryArray[#1]{\mathsf{ISAb}}}
\newcommand{\BA}[1]{\UnaryArray[#1]{\ensuremath{\mathsf{BUCKET\_A}}}}
\newcommand{\BB}[1]{\UnaryArray[#1]{\ensuremath{\mathsf{BUCKET\_B}}}}
\newcommand{\BBS}[1]{\UnaryArray[#1]{\ensuremath{\mathsf{BUCKET\_BSTAR}}}}
\newcommand{\m}{\textsf{m}}
\newcommand{\n}{\textsf{n}}
\newcommand{\cZ}{\textsf{c0}}
\newcommand{\cO}{\textsf{c1}}
\newcommand{\TypeA}{\textmd{A}}
\newcommand{\TypeB}{\textmd{B}}
\newcommand{\TypeBs}{\ensuremath{\textmd{B}^\star}}
\newcommand{\geqlex}{\ensuremath{\geq}}
\newcommand{\leqlex}{\ensuremath{\leq}}
\newcommand{\llex}{\ensuremath{<}}
\newcommand{\glex}{\ensuremath{>}}
\newcommand{\eqlex}{\ensuremath{=}}
\newtheorem{observation}[theorem]{Observation}
\newcolumntype{C}[1]{>{\centering\let\newline\\\arraybackslash\hspace{0pt}}m{#1}}
\def\textSq#1{%
\begingroup
\setlength{\fboxsep}{0.3ex}
\setbox1=\hbox{#1}
\setlength{\@tempdima}{\maxof{\wd1}{\ht1+\dp1}}
\setlength{\@tempdimb}{(\@tempdima-\ht1+\dp1)/2}
\raise-\@tempdimb\hbox{\fbox{\vbox to \@tempdima{%
  \vfil\hbox to \@tempdima{\hfil\copy1\hfil}\vfil}}}%
\endgroup%
}
\begin{document}
\title{Dismantling DivSufSort%
\thanks{This work was supported by the German Research Foundation (DFG), priority programme ``Algorithms for Big Data'' (SPP 1736).}}%
\author{Johannes Fischer \and Florian Kurpicz}
\authorrunning{J.~Fischer \and F.~Kurpicz}
\institute{
Dept.\ of Computer Science, Technische Universit{\"a}t Dortmund, Germany\\
\email{johannes.fischer@cs.tu-dortmund.de},
\email{florian.kurpicz@tu-dortmund.de}%
}
\maketitle

\begin{abstract}
We give the first concise description of
the fastest known suffix sorting algorithm in main memory, the \emph{DivSufSort}
by Yuta Mori. We then present an extension that also computes the LCP-array,
which is competitive with the fastest known LCP-array construction algorithm.
\end{abstract}

\begin{keywords}
text indexing; suffix sorting; algorithm engineering
\end{keywords}

\section{Introduction} 
\label{sec:introduction}
The suffix array \cite{Manber1993} is arguably one of the most interesting and versatile data structure in stringology.
Despite the plethora of theoretical and practical papers on suffix sorting (see the two overview articles \cite{Puglisi2007,dhaliwal12trends} for an overview up to 2007/2012), 
the text indexing community faces the curiosity that the fastest and most space-conscious way to construct the suffix array is by an algorithm called \emph{DivSufSort} (coded by Yuta Mori), which has only appeared as (almost undocumented) source code, and has never been described in an academic context.
The speed and its space-consciousness make DivSufSort still the method of choice in many software systems, e.g.\ in bioinformatics libraries\footnote{\url{https://github.com/NVlabs/nvbio}, last seen 05.07.2017}, and in the \emph{succinct data structures library (sdsl)} \cite{Gog2014}.

The starting point of this article was that we wanted to get a better understanding of DivSufSort's functionality and the reasons for its advantages in performance, but we could not find any arguments for this neither in the literature nor in the documentation. We therefore dove into the source code (consisting of more than 1,000 LOCs) ourselves, and want to communicate our findings in this article.
We point out that just very recently
Labeit et al.~\cite{Labeit2016parallelWaveletTree} 
parallelized DivSufSort, making it also the fastest \emph{parallel} suffix array construction algorithm (on all instances but one).
We think that this successful parallelization adds another reason for why
a deeper study of DivSufSort is worthwile.

\subsubsection{Our Contributions and Outline.}
This article pursues two goals: First, it gives a concise description of the DivSufSort-algorithm (Sect.~\ref{sec:divsufsort}), so that readers wishing to understand or modify the source code have an easy-to-use reference at hand. Second (Sect.~\ref{sec:inducing_the_lcp}), we provide and describe our own enhancement of DivSufSort that also computes
related and equally important information, the array of longest common prefixes of lexicographically adjacent suffixes (\emph{LCP-array} for short). 
We test our implementation empirically on a well-accepted testbed and prove it competitive with existing implementations, sometimes even little faster.

To help the reader link our description to the implementation, we show relevant excerpts from the original code\footnote{\url{https://github.com/y-256/libdivsufsort}, last seen 05.07.2017}, along with their original line numbers in the source code (\textsf{difsufsort.c}, \textsf{sssort.c}, and \textsf{trsort.c}).
In the following, we use a \textsf{slanted} font for variables that also appear verbatim in the source code; e.g., $\textsf{T}$ for the text.

\section{Preliminaries} 
\label{sec:preliminaries}
Let $\Text{}=\Text{0}\Text{1}\dots\Text{\n-1}$ be a \emph{text} of length \n\ consisting of characters from an ordered \emph{alphabet} $\Sigma$ of size $\sigma=\vert\Sigma\vert$.
For integers $0\leq i\leq j\leq \n$, the notation $[i,j)$ represents the integers from $i$ to $j-1$, and $\Substring{i,j}$ the \emph{substring} $\Text{i}\dots\Text{j-1}$.
We call $S_i=\Substring{i,\n}$ the $i$-th \emph{suffix} of \Text{}.
The \emph{suffix array} \SA{} of a text \Text{} of length \n\ is a permutation of $[0,\n)$ such that $S_{\SA{i}}\llex S_{\SA{i+1}}$ for all $0\leq i< \n-1$.
In \SA{}, all suffixes starting with the same character $\cZ\in\Sigma$ form a contiguous interval called \cZ\emph{-bucket}.
The same is true for all suffixes starting with the same two characters $\cZ,\cO\in\Sigma$.
We call the corresponding intervals $(\cZ,\cO)$\emph{-buckets}.
The \emph{inverse suffix array} \ISA{} is the inverse permutation of \SA{}.
The \emph{longest common prefix} of two suffixes $S_i$ and $S_j$ is $\lcp{i}{j}=\max\left\{s\geq 0 \colon \Substring{i,i+s}\eqlex\Substring{j,j+s}\right\}$.
The \emph{longest common prefix array} \LCP{} of \Text{} contains the longest common prefixes of the lexicographically consecutive suffixes, i.e., $\LCP{0}=0$ and $\LCP{i}=\lcp{\SA{i-1}}{\SA{i}}$ for all $1\leq i\leq \n-1$.

\begin{figure}[t]
    \centering
    \begin{tabular}{lC{.575cm}C{.575cm}C{.575cm}C{.575cm}C{.575cm}C{.575cm}C{.575cm}C{.575cm}C{.575cm}C{.575cm}C{.575cm}C{.575cm}C{.575cm}}
        \toprule
        $i$&0&1&2&3&4&5&6&7&8&9&10&11&12\\
        \midrule
        $\mathsf{T}\lbrack i\rbrack$&\texttt{c}&\texttt{d}&\texttt{c}&\texttt{d}&\texttt{c}&\texttt{d}&\texttt{c}&\texttt{d}&\texttt{c}&\texttt{c}&\texttt{d}&\texttt{d}&\texttt{\$}\\
        \midrule
        $\mathop{\mathsf{type}}\left(i\right)$&\hphantom{$^{\star}$}\TypeBs&\TypeA&\hphantom{$^{\star}$}\TypeBs&\TypeA&\hphantom{$^{\star}$}\TypeBs&\TypeA&\hphantom{$^{\star}$}\TypeBs&\TypeA&\TypeB&\hphantom{$^{\star}$}\TypeBs&\TypeA&\TypeA&\TypeA\\
    \bottomrule
    \end{tabular}
    \caption{Classification of $\Text{}=\mathtt{cdcdcdcdccdd\$}$ (our running example).
    \label{fig:running}
    }
\end{figure}

We classify all suffixes as follows (a technique first introduced by \cite{Itoh1999}; see Figure~\ref{fig:running}).
The suffix $S_i$ is an \TypeA{}-suffix (or ``$S_i$ has type \TypeA{}'') if $\Text{i}\glex \Text{i+1}$ or $i = \n-1$.
If $\Text{i}\llex \Text{i+1}$, then $S_i$ is a \TypeB{}-suffix (or ``has type \TypeB{}'').
Last, if $\Text{i}\eqlex \Text{i+1}$ then $S_i$ has the same type as $S_{i+1}$.\footnote{This differs from \cite{Itoh1999}, where $S_{i}$ is always a \TypeB{}-suffix if $\Text{i}\eqlex \Text{i+1}$.}
We further distinguish \TypeB{}-suffixes: if $S_i$ has type \TypeB{} and $S_{i+1}$ has type \TypeA, then suffix $S_i$ is also a \TypeBs-suffix.
Note that there are at most $\frac{\n}{2}$ \TypeBs-suffixes.
The definition of types implies restrictions on how the suffixes are distributed within one bucket:
A $(\cZ,\cO)$-bucket cannot contain \TypeA-suffixes if $\cZ\llex \cO$, and it cannot contain \TypeB-suffixes if $\cZ\glex \cO$.
If $\cZ\eqlex \cO$ it cannot contain \TypeBs-suffixes.
The classification also induces a partial order among the suffixes (see also Fig.~\ref{fig:example_of_suffix_type_order}):
\begin{lemma}
    \label{lem:order_of_suffixes_by_type}
    Let $S_i$ and $S_j$ be two suffixes. Then
    \begin{enumerate}
        \item $S_i\llex S_j$ if $S_i$ has type \TypeA{}, $S_j$ has type \TypeB{} and $\Text{i}=\Text{j}$, and
        \item $S_i\llex S_j$ if $S_i$ has type \TypeBs{}, $S_j$ has type \TypeB{} but not type \TypeBs{} and $\Text{i,i+1}=\Text{j,j+1}$.
    \end{enumerate}
\end{lemma}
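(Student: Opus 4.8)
The plan is to prove both statements directly by comparing the suffixes character by character, exploiting the characterization of the types in terms of how $\Text{i}$ compares to $\Text{i+1}$. The key observation I would use throughout is that within a maximal run of equal characters, the type is constant and determined by the character immediately following the run: a run of $c$'s consists of \TypeA-suffixes if that run is followed by a character $\llex c$, and of \TypeB-suffixes if it is followed by a character $\glex c$; moreover, in the latter case, exactly the last suffix of the run is \TypeBs.

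\emph{Statement (1).} Suppose $S_i$ has type \TypeA, $S_j$ has type \TypeB, and $c := \Text{i} = \Text{j}$. Let $\ell \geq 1$ be the length of the maximal run of $c$'s starting at position $i$, so $\Text{i} = \Text{i+1} = \dots = \Text{i+\ell-1} = c$ and $\Text{i+\ell} \neq c$; since $S_i$ is \TypeA, we have $\Text{i+\ell} \llex c$ (including the boundary case $i+\ell = \n$ handled via the sentinel, or via the convention $i = \n-1$). Define $m \geq 1$ analogously for $j$: $\Text{j} = \dots = \Text{j+m-1} = c$, $\Text{j+m} \neq c$, and since $S_j$ is \TypeB we have $\Text{j+m} \glex c$. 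I would then split into two cases. If $\ell \leq m$, the two suffixes agree on the first $\ell$ characters (all equal to $c$), and at position $\ell$ we compare $\Text{i+\ell} \llex c \leqlex \Text{j+\ell}$ (the latter because $\ell \leq m$ means $\Text{j+\ell}$ is still in the $c$-run or equals $\Text{j+m}\glex c$); hence $S_i \llex S_j$. If $\ell > m$, they agree on the first $m$ characters, and at position $m$ we compare $\Text{i+m} = c \glex \Text{j+m}$ is the wrong direction — so here I instead note $\Text{i+m} = c$ while $\Text{j+m} \glex c$, giving $S_i \llex S_j$ immediately at position $m$. Either way the claim follows.

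\emph{Statement (2).} Suppose $S_i$ has type \TypeBs, $S_j$ has type \TypeB{} but not \TypeBs, and $\Text{i,i+2} = \Text{j,j+2}$; write $c := \Text{i} = \Text{j}$ and $d := \Text{i+1} = \Text{j+1}$, so $c \leqlex d$ (since both suffixes are \TypeB) and in fact either $c \llex d$, or $c \eqlex d$ with the common run of $c$'s continuing. Because $S_i$ is \TypeBs, $S_{i+1}$ is \TypeA; because $S_j$ is \TypeB{} but not \TypeBs, $S_{j+1}$ is \TypeB. Both $S_{i+1}$ and $S_{j+1}$ start with the character $d$. I would then apply Statement (1) to the pair $(S_{i+1}, S_{j+1})$: $S_{i+1}$ has type \TypeA, $S_{j+1}$ has type \TypeB, and $\Text{i+1} = \Text{j+1}$, so by (1) we get $S_{i+1} \llex S_{j+1}$. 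Since $\Text{i} = \Text{j} = c$ prepends the same character to both, the lexicographic order is preserved: $S_i = c \cdot S_{i+1} \llex c \cdot S_{j+1} = S_j$.

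\emph{Main obstacle.} The routine part is the character-by-character bookkeeping; the one genuinely delicate point is handling the boundary cases correctly — in particular where a run of equal characters reaches the end of the text (position $\n-1$ or the sentinel $\Text{\n-1} = \$$), and making sure the definition "$\Text{i}\eqlex\Text{i+1}$ implies $S_i$ and $S_{i+1}$ share a type" is used consistently so that the constant-type-within-a-run claim holds. I would dispatch this cleanly by appending the sentinel $\$$ (smaller than every other character, as in the running example), which forces the run-terminating character to be well-defined and makes $S_{\n-1}$ unambiguously \TypeA; then both statements reduce to the finite-prefix comparisons above with no special-casing needed.
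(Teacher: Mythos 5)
Your proof is correct. For part (1) it is essentially the paper's argument in a more explicit form: the paper observes that both suffixes begin with runs of the same character $\cZ$, looks at the first characters $u=\Text{i+\lcp{i}{j}}$ and $v=\Text{j+\lcp{i}{j}}$ where the suffixes differ, and concludes $u\leqlex \cZ\leqlex v$ with at least one inequality strict (the type-\TypeA{} run can only be followed by something $\leqlex\cZ$, the type-\TypeB{} run only by something $\geqlex\cZ$); your case split on the run lengths $\ell$ and $m$ is the same comparison spelled out, and both arguments need the same boundary convention when a run reaches the end of the text, which you handle via the sentinel. Where you genuinely depart from the paper is part (2): the paper merely states that ``the argument for the second case works analogously,'' i.e.\ it repeats the character-comparison argument one position deeper, whereas you reduce (2) to (1) by observing that $S_{i+1}$ has type \TypeA{}, $S_{j+1}$ has type \TypeB{}, and $\Text{i+1}=\Text{j+1}$, then prepend the common first character. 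That reduction is valid and arguably cleaner, since it replaces a second run-analysis by a one-line appeal to the already-proved case. One stylistic remark: in your $\ell>m$ subcase the sentence starts down a wrong path before correcting itself (``\dots is the wrong direction --- so here I instead note\dots''); the final inequality $\Text{i+m}=c\llex\Text{j+m}$ is right, but the detour should be removed in a written-up version.
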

\begin{proof}
   \TypeA- and \TypeB-suffixes can only occur together in a $(\cZ,\cZ)$-bucket.
   Assume that $S_i$ and $S_j$ start with $\cZ\cZ$ followed by a (possibly empty) sequence of $\cZ$'s and $S_i, S_j$ have type \TypeA, \TypeB, resp.
   Let $u=\Text{i+\lcp{i}{j}}$ and $v=\Text{j+\lcp{i}{j}}$ be the first characters where the suffixes differ. Therefore, $u\leqlex \cZ$ and $v\geqlex \cZ$.
   Since the characters differ, at least one of the inequalities is strict.
   The argument for the second case works analogously.\qed
\end{proof}

\begin{figure}[t]
    \centering
    \begin{tikzpicture}
      \node at (0,0) {$(\mathtt{w,w})$};
      \node at (.9,0) {$(\mathtt{w,x})$};
      \node at (1.8,0) {$(\mathtt{w,y})$};
      \node at (2.7,0) {$(\mathtt{w,z})$};
      \node at (3.6,0) {$(\mathtt{x,w})$};
      \node at (4.5,0) {$(\mathtt{x,x})$};
      \node at (5.4,0) {$(\mathtt{x,y})$};
      \node at (6.3,0) {$(\mathtt{x,z})$};
      \node at (7.2,0) {$(\mathtt{y,w})$};
      \node at (8.1,0) {$(\mathtt{y,x})$};
      \node at (9,0) {$(\mathtt{y,y})$};
      \node at (9.9,0) {$(\mathtt{y,z})$};
      \node at (10.8,0) {$(\mathtt{z,w})$};
      \node at (11.7,0) {$(\mathtt{z,x})$};
      \node at (12.6,0) {$(\mathtt{z,y})$};
      \node at (13.5,0) {$(\mathtt{z,z})$};

      \draw (-.5,.25) -- (14,.25);
      \draw (3.15,.25) -- (3.15,-.25);
      \draw (6.75,.25) -- (6.75,-.25);
      \draw (10.35,.25) -- (10.35,-.25);
      \fill[preaction={fill, black!40},pattern=north east lines,draw] (-.45, -.25) rectangle (.45, -.7);
      \fill[preaction={fill, black!60},pattern=north west lines,draw] (.45, -.25) rectangle (.9, -.7);
      \fill[preaction={fill, black!40},pattern=north east lines,draw] (.9, -.25) rectangle (1.35, -.7);
      \fill[preaction={fill, black!60},pattern=north west lines,draw] (1.35, -.25) rectangle (1.8, -.7);
      \fill[preaction={fill, black!40},pattern=north east lines,draw] (1.8, -.25) rectangle (2.25, -.7);
      \fill[preaction={fill, black!60},pattern=north west lines,draw] (2.25, -.25) rectangle (3.15, -.7);

      \fill[preaction={fill, black!20},pattern=dots,draw] (3.15, -.25) rectangle (4.5, -.7);
      \fill[preaction={fill, black!40},pattern=north east lines,draw] (4.5, -.25) rectangle (4.95, -.7);
      \fill[preaction={fill, black!60},pattern=north west lines,draw] (4.95, -.25) rectangle (5.4, -.7);
      \fill[preaction={fill, black!40},pattern=north east lines,draw] (5.4, -.25) rectangle (5.85, -.7);
      \fill[preaction={fill, black!60},pattern=north west lines,draw] (5.85, -.25) rectangle (6.3, -.7);
      \fill[preaction={fill, black!40},pattern=north east lines,draw] (6.3, -.25) rectangle (6.75, -.7);

      \fill[preaction={fill, black!20},pattern=dots,draw] (6.75, -.25) rectangle (9, -.7);
      \fill[preaction={fill, black!40},pattern=north east lines,draw] (9, -.25) rectangle (9.45, -.7);
      \fill[preaction={fill, black!60},pattern=north west lines,draw] (9.45, -.25) rectangle (9.9, -.7);
      \fill[preaction={fill, black!40},pattern=north east lines,draw] (9.9, -.25) rectangle (10.35, -.7);

      \fill[preaction={fill, black!20},pattern=dots,draw] (10.35, -.25) rectangle (13.95, -.7);
    \end{tikzpicture}
    \caption{Position of the suffix types within the
    $\left(\cZ,\cO\right)$-buckets for
    $\Sigma=\left\{ \mathtt{w}, \mathtt{x}, \mathtt{y}, \mathtt{z} \right\}$.
    Light gray (\protect\tikz[baseline=.25ex]{ \fill[preaction={fill, black!20},pattern=dots,draw] (0, 0) rectangle (.3, .3); }) areas represent positions of \TypeA-suffixes,
    gray (\protect\tikz[baseline=.25ex]{ \fill[preaction={fill, black!40},pattern=north east lines,draw] (0, 0) rectangle (.3, .3); }) areas represent positions of \TypeB-suffixes, and
    dark gray (\protect\tikz[baseline=.25ex]{ \fill[preaction={fill, black!60},pattern=north west lines,draw] (0, 0) rectangle (.3, .3); }) areas represent positions of \TypeBs-suffixes.
    }
    \label{fig:example_of_suffix_type_order}
\end{figure}

Given two consecutive \TypeBs-suffixes $S_i$ and $S_j$ (i.e., there is no \TypeBs-suffix $S_k$ such that $i<k<j$), we call the substring $\Substring{i,j+2}$ \emph{\TypeBs-substring}.
Also, for the last \TypeBs-suffix $S_i$ (i.e., there is no \TypeBs-suffix $S_k$ with $i<k<\n$), the substring $\Substring{i,\n}$ is also called a \TypeBs-substring.

\section{DivSufSort} 
\label{sec:divsufsort}
In this section we describe DivSufSort based on its current implementation (libdivsufsort  v2.0.2).
The algorithm consists of three phases:
\begin{itemize}
  \item First, we identify the types of all suffixes and compute the corresponding $\cZ$- and $\left(\cZ,\cO\right)$-bucket borders.
  This requires one scan of the text.
  \item Next, we sort all \TypeBs-suffixes and place them at their correct position in \SA{}.
  This is the most complicated part, as we first have to sort the \TypeBs-substrings in-place.
  Then, we use the ranks of the sorted \TypeBs-substrings to sort the corresponding \TypeBs-suffixes.
  \item In the last step, we scan \SA{} twice to induce the correct position of all remaining suffixes.
  (We first scan from right to left to induce all \TypeB-suffixes, followed by a scan from left to right, inducing all \TypeA-suffixes.)
\end{itemize}

\begin{wrapfigure}{r}{0.28\textwidth}
  \centering
  \vspace{-1em}
  \includegraphics[scale=1.2]{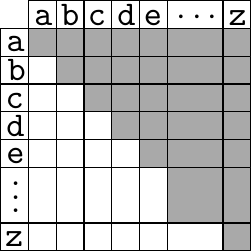}
  \caption{\BB{} (gray) and \textsf{BUCKET\_}-\textsf{BSTAR} represented as a 2-dimensional array.}
  \label{fig:bbs_2_dim_arr}
\end{wrapfigure}
Throughout the computation we utilize two additional arrays to store information about the buckets:
\BA{} (for \TypeA-suffixes) and \BB{} (for \TypeB- and \TypeBs-suffixes) of size $\sigma$ and $\sigma^2$, resp.
The former is used to store values associated with \TypeA-suffixes and is accessed by only one character.
The latter is used to store values associated with \TypeB- and \TypeBs-suffixes and is accessed by two characters.
\BB{\cZ,\cO} is short for \BB{\vert \cZ\vert\cdot\sigma+\vert \cO\vert} and \BBS{\cZ,\cO} is short for \BB{\vert \cO\vert\cdot\sigma+\vert \cZ\vert}, where $\vert \alpha\vert$ denotes the rank of $\alpha$ in the alphabet $\Sigma$. Information about both suffixes can be stored in the same array (Figure~\ref{fig:bbs_2_dim_arr}), as there are no \TypeBs-suffixes in $\left(\cZ,\cZ\right)$-buckets and no \TypeB-suffixes in $\left(\cZ,\cO\right)$-buckets for $\cZ\glex \cO$.
We denote the number of \TypeBs-suffixes by \m.

\subsection{Initializing DivSufSort} 
\label{sub:initializing_divsufsort}
\captionsetup*[subfigure]{position=bottom}
\begin{figure}[t]
    \centering
        \subfloat[]{%
            \begin{tabular}{rC{.355cm}C{.355cm}C{.355cm}C{.355cm}C{.355cm}C{.355cm}C{.355cm}C{.355cm}C{.355cm}C{.355cm}C{.355cm}C{.355cm}C{.355cm}}
                \arrayrulecolor{black!0} 
                \midrule
                \arrayrulecolor{black} 
                &&&&&&&&&&&&&\\
                \toprule
                    $i\hphantom{]}$&0&1&2&3&4&5&6&7&8&9&10&11&12\\
                \midrule
                $\mathsf{T}\lbrack i\rbrack$&\texttt{c}&\texttt{d}&\texttt{c}&\texttt{d}&\texttt{c}&\texttt{d}&\texttt{c}&\texttt{d}&\texttt{c}&\texttt{c}&\texttt{d}&\texttt{d}&\texttt{\$}\\
                \midrule
                    $\mathsf{SA}\lbrack i\rbrack$&0&0&0&0&0&0&0&0&\cellcolor{black!25}0&\cellcolor{black!25}2&\cellcolor{black!25}4&\cellcolor{black!25}6&\cellcolor{black!25}9\\
                \bottomrule
            \end{tabular}
            \label{fig:sa_first_scan}%
        }\hphantom{~}
        \subfloat[]{%
            \begin{tabular}{lccccc}
                \toprule
                                        &\texttt{\$}&\texttt{c}&\texttt{d} &(\texttt{c,c})&(\texttt{c,d})\\
                \midrule
                    \textsf{BUCKET\_A}    &1&0&6&-&-\\
                \midrule
                    \textsf{BUCKET\_B}    &-&-&-&1&-\\
                \midrule
                    \textsf{BUCKET\_BSTAR}   &-&-&-&-&5\\
                \bottomrule
            \end{tabular}
            \label{fig:buckets_first_scan}%
        }\\[.2em]
        \subfloat[]{%
            \begin{tabular}{rC{.355cm}C{.355cm}C{.355cm}C{.355cm}C{.355cm}C{.355cm}C{.355cm}C{.355cm}C{.355cm}C{.355cm}C{.355cm}C{.355cm}C{.355cm}}
                \arrayrulecolor{black!0} 
                \midrule
                \arrayrulecolor{black} 
                &\texttt{\$}&\multicolumn{6}{|c|}{\texttt{c}}&\multicolumn{6}{c}{\texttt{d}}\\
                \toprule
                    $i\hphantom{]}$&0&1&2&3&4&5&6&7&8&9&10&11&12\\
                \midrule
                $\mathsf{T}\lbrack i\rbrack$&\texttt{c}&\texttt{d}&\texttt{c}&\texttt{d}&\texttt{c}&\texttt{d}&\texttt{c}&\texttt{d}&\texttt{c}&\texttt{c}&\texttt{d}&\texttt{d}&\texttt{\$}\\
                \midrule
                    $\mathsf{SA}\lbrack i\rbrack$&0&0&0&0&0&0&0&0&\cellcolor{black!25}0&\cellcolor{black!25}2&\cellcolor{black!25}4&\cellcolor{black!25}6&\cellcolor{black!25}9\\
                \bottomrule
            \end{tabular}
          \label{fig:sa_prefix_sum}%
       }\hphantom{~}
       \subfloat[]{%
            \begin{tabular}{lccccc}
                \toprule
                                        &\texttt{\$}&\texttt{c}&\texttt{d} &(\texttt{c,c})&(\texttt{c,d})\\
                \midrule
                    \textsf{BUCKET\_A}    &\textbf{0}&\textbf{1}&\textbf{7}&-&-\\
                \midrule
                    \textsf{BUCKET\_B}    &-&-&-&1&-\\
                \midrule
                    \textsf{BUCKET\_BSTAR}   &-&-&-&-&5\\
                \bottomrule
            \end{tabular}
          \label{fig:buckets_prefix_sum}%
       }\\[.2em]
       \subfloat[]{%
            \begin{tabular}{rC{.355cm}C{.355cm}C{.355cm}C{.355cm}C{.355cm}C{.355cm}C{.355cm}C{.355cm}C{.355cm}C{.355cm}C{.355cm}C{.355cm}C{.355cm}}
                \arrayrulecolor{black!0} 
                \midrule
                \arrayrulecolor{black} 
                &\texttt{\$}&\multicolumn{6}{|c|}{\texttt{c}}&\multicolumn{6}{c}{\texttt{d}}\\
                \toprule
                    $i\hphantom{]}$&0&1&2&3&4&5&6&7&8&9&10&11&12\\
                \midrule
                $\mathsf{T}\lbrack i\rbrack$&\texttt{c}&\texttt{d}&\texttt{c}&\texttt{d}&\texttt{c}&\texttt{d}&\texttt{c}&\texttt{d}&\texttt{c}&\texttt{c}&\texttt{d}&\texttt{d}&\texttt{\$}\\
                \midrule
                    $\mathsf{SA}\lbrack i\rbrack$&\cellcolor{black!10}4&\cellcolor{black!10}0&\cellcolor{black!10}1&\cellcolor{black!10}2&\cellcolor{black!10}3&0&0&0&\cellcolor{black!25}0&\cellcolor{black!25}2&\cellcolor{black!25}4&\cellcolor{black!25}6&\cellcolor{black!25}9\\
                \bottomrule
            \end{tabular}
          \label{fig:sa_references}%
       }\hphantom{~}
       \subfloat[]{%
            \begin{tabular}{lccccc}
                \toprule
                                        &\texttt{\$}&\texttt{c}&\texttt{d} &(\texttt{c,c})&(\texttt{c,d})\\
                \midrule
                    \textsf{BUCKET\_A}    &0&1&7&-&-\\
                \midrule
                    \textsf{BUCKET\_B}    &-&-&-&1&-\\
                \midrule
                    \textsf{BUCKET\_BSTAR}   &-&-&-&-&\textbf{0}\\
                \bottomrule
            \end{tabular}
          \label{fig:buckets_references}%
       }
       \caption{%
       \SA{} and the buckets after the first scan of \Text{} are shown in \protect\subref{fig:sa_first_scan} and \protect\subref{fig:buckets_first_scan}.
       \PAb{} (dark gray $\color{black!25}{\blacksquare}$ in \protect\subref{fig:sa_first_scan}, \protect\subref{fig:sa_prefix_sum} and \protect\subref{fig:sa_references}) contains the text positions of all \TypeBs-suffixes in \emph{text order}.
       The buckets \protect\subref{fig:buckets_first_scan} contain the number of suffixes beginning with the corresponding characters.
       In \protect\subref{fig:buckets_prefix_sum}, they are updated such the first position of each $\cZ$-bucket is stored in \BA{\cZ} (bold entires).
       The \SA{} does not change during this update, see~\protect\subref{fig:sa_prefix_sum}.
       In \protect\subref{fig:sa_references} we stored references to the text positions in \SA{0..\m-1} (light gray $\color{black!10}{\blacksquare}$) and update the corresponding \BBS{} with the first position in \SA{0..\m-1} (bold entry in \protect\subref{fig:buckets_references}).
       }
       \label{fig:divsufsort_initialization}
\end{figure}

The initialization of DivSufSort is listed in \textsf{divsufsort.c}.
First, we scan \Text{} from right to left (line~60), determine the type of each suffix and store the sizes of the corresponding buckets in \BA{}, \BB{} and \BBS{} (lines~62, 69 and 65).
In addition, we store the text position of each \TypeBs-suffix at the end of \SA{} such that $\SA{}[\n-\m..\n)$ contains the text positions of all \TypeBs-suffixes (line~66).
We call this part of the suffix array \PAb{} with $\PAb{i}=\SA{\n-\m+i}$ for all $0\leq i<\m$ (line~94), see Figure~\ref{fig:divsufsort_initialization}~\subref{fig:sa_first_scan} and~\subref{fig:buckets_first_scan}.

Next (lines~81 to 90), we compute the prefix sum of \BA{} and\linebreak \BBS{}, such that \BA{\cZ} contains the leftmost position of each $\cZ$-bucket and \BBS{\cZ,\cO} contains the rightmost position of the corresponding \TypeBs-suffixes with respect only to other \TypeBs-suffixes, i.e., the positions are in the interval $\lbrack 0,\m)$, see Figures~\ref{fig:divsufsort_initialization}~\subref{fig:sa_prefix_sum}~and~\subref{fig:buckets_prefix_sum}, where \subref{fig:sa_prefix_sum} remains unchanged.
During the sorting step, we do not sort the text positions.
Instead we sort \emph{references} to these positions.
These references are stored in $\SA{}[0..\m)$ (line~97).
During this step, \BBS{\cZ,\cO} is updated (line~97), such that it now contains the leftmost reference corresponding to a \TypeBs-suffix in the $\left(\cZ,\cO\right)$-bucket within the interval $[0,\m)$.
The reference to the last \TypeBs-suffix is put at the beginning of its corresponding bucket (line~100).
This reference is a special case as it has no successor in \PAb{} that is required for the comparison of two \TypeBs-substrings, see Figure~\ref{fig:divsufsort_initialization}~\subref{fig:sa_references} and~\subref{fig:buckets_references}.


\subsection{Sorting the \TypeBs-Suffixes} 
\label{sub:sorting_the_b_star_suffixes}
In this section, we describe how the \TypeBs-suffixes are sorted in three steps.
First, all \TypeBs-substrings are sorted independently for each $(\cZ,\cO)$-bucket (lines~134 to 142) using functions defined in \textsf{sssort.c}.
Then (second step starting at line~146), a partial \ISA{} (named \ISAb{}) is computed, containing the ranks of the partially sorted \TypeBs-suffixes (sorted by their initial \TypeBs-substrings).
Using these ranks we compute the lexicographical order of all \TypeBs-suffixes adopting an approach similar to prefix doubling, in the last step using functions defined in \textsf{trsort.c} (line~159).
We augment the approach with \emph{repetition detection} as introduced by Maniscalco and Puglisi~\cite{Maniscalco2007}.

\subsubsection{Sorting the \TypeBs-Substrings.}
\label{ssub:sorting_the_b_star_substrings}
All \TypeBs-substrings in a \BBS{} are sorted independently and in-place.
The interval of \SA{} that has not been used yet ($\SA{}[\m..\n-\m)$) serves as a buffer during the sorting (line~133).
We refer to this part of \SA{} as \buf{} with $\buf{i}=\SA{\m+i}$ for all $0\leq i< \n-2\m$.
This part of DivSufSort can be executed in parallel by sorting the \BBS{} in parallel, i.e., all \TypeBs-substring in one \BBS{} are sorted sequentially, but multiple \BBS{} are processed in parallel (see \textsf{divsufsort.c}, lines~105 to 131).
Here, each process gets a buffer of size $\frac{\vert\buf{}\vert}{\textsf{p}}$, where \textsf{p} is the number of processes.
All following line numbers in this subsection refer to \textsf{sssort.c}.

In the default configuration we only sort 1024 elements at once (see \textsf{SS\_BLOCK}-\textsf{SIZE}, e.g., line~763).
If the size of \buf{} is smaller than 1024 or the size of the current bucket, the bucket is divided in smaller subbuckets which are then sorted and merged (see line~767, splitting due to the buffer size and the loop at line~770 splitting with respect to the number of elements).
Lines~789 to~802 are used to merge the last considered subbuckets.
If the currently sorted bucket contains the last \TypeBs-substring it is moved to the corresponding position (lines~811 and~813).

The heavy lifting is done by the function \textsf{ss\_mintrosort} that is an implementation of \emph{Introspective Sort} (ISS) \cite{Musser1997}.
It sorts all \TypeBs-substring within the interval $\lbrack\mathtt{first}, \mathtt{last}\rbrack$ (line~310).
ISS uses \emph{Multikey Quicksort} (MKQS) \cite{Bentley1997} and \emph{Heapsort} (HS).
MKQS is used $\lfloor\lg\left(\mathtt{last}-\mathtt{first}\right)\rfloor$ times to sort an interval before HS is used (if there are still elements in the interval that have been equal to the pivot each time, see line~333).
MKQS divides each interval into three subintervals with respect to a pivot element.
The first subinterval contains all substrings whose $k$-th character is smaller than the pivot, the second subinterval contains all substrings whose $k$-th character is equal to the pivot, and the last subinterval contains all substrings whose $k$-th character is greater than the pivot.
We call $k$ the \textsf{depth} of the current iteration (line~332).
ISS is not implemented recursively; instead, a stack is used to keep track of the unsorted subintervals and the smaller subintervals are always processed first.
This guarantees a maximum stack size of $\lg\ell$, where $\ell$ is the initial interval size \cite[p.~67]{Mehlhorn1984}.
The subintervals containing the substrings whose $k$-th character is not equal to the pivot are sorted using MKQS $\lfloor\lg\left(\mathtt{last}-\mathtt{first}\right)\rfloor$ times before using HS, where now  $\mathtt{last}$ and $\mathtt{first}$ refer to the first and last positions of these intervals (lines~414 and 428).

Whenever an unsorted (sub)bucket is smaller than a threshold ($8$ in the default configuration), \emph{Insertionsort} (IS) is used to sort the bucket and mark it sorted (line~326).
Whenever we compare two \TypeBs-Substrings during IS, we use the function \textsf{ss\_compare} that compares two \TypeBs-substrings starting at the current \textsf{depth} and compares the substrings character by character.

Throughout the sorting of the \TypeBs-substrings, substrings that cannot be fully sorted, i.e. \TypeBs-substrings that are equal, are marked by storing their bitwise negated reference (line~178).
Only the first reference of such an interval is stored normally to identify the beginning of an interval of unsorted substrings (line~178).
There are \TypeBs-suffixes that are not sorted completely by their initial \TypeBs-substrings e.g., in our example $\Text{}=\mathtt{cdcdcdcdccdd\$}$ the \TypeBs-substring \texttt{cdcd} occurs three times -- see Figure~\ref{fig:sorted_substrings_all}.
Therefore, we cannot determine the order of the corresponding \TypeBs-suffixes just using their initial \TypeBs-substring.
The idea of sorting the suffixes in a $(\cZ,\cO)$-bucket up to a certain \textsf{depth} is similar to the approach of Manzini and Ferragina \cite{Manzini2004}, who sort the suffixes up to a certain \LCP{}-value.


\subsubsection{Computing the Partial Inverse Suffix Array.} 
\label{ssub:computing_the_inverse_suffix_array}
After the \TypeBs-substrings are sorted, we compute the \ISA{} for the partially sorted \TypeBs-substrings (lines~146 to 156).
The inverse suffix array for the \TypeBs-suffixes is stored in $\SA{}[\m..2\m)$ and referred to as \ISAb{} with $\ISAb{i}=\SA{\m+i}$. \ISAb{i} contains the \emph{rank} of the $i$-th \TypeBs-suffix, i.e., the number of lexicographically smaller \TypeBs-suffixes.
All references to line numbers in this subsection refer to \textsf{divsufsort.c}.
We scan the $\SA{}[0..\m)$ from right to left (line~146) and distinguish between bitwise negated references (values $<0$, starting at line~154) and non-negated references (values $\geq0$, starting at line~147).
In the first case, we have reached an interval where we have references of suffixes which could not be sorted comparing only the \TypeBs-substring.
We assign each of those suffixes the greatest feasible rank, i.e., $\m-i$, where $i$ is the number of lexicographically greater suffixes (similar to Larsson and Sadakane~\cite{Larsson2007}).
In addition we also store the bitwise negation of the references, i.e., the original reference.
In the other case (a value $\geq 0$) we simply assign the correct rank to the \TypeBs-suffix.
Whenever we scan an interval of completely sorted \TypeBs-suffixes, we mark the first position of the interval in $\SA{}[0..\m)$ with $-k$, where $k$ is the size of the interval (line~150).
Now we can identify all sorted intervals as they start with a negative value whose absolute value is the length of the interval.

In our example (see Figure~\ref{fig:partial_isab}) we have two fully sorted intervals of length $1$ at \SA{0} and \SA{4}, and an only partially sorted interval in \SA{1..3}.

\begin{figure}[t]
    \centering
       \subfloat[]{%
            \begin{tabular}{rC{.355cm}C{.355cm}C{.355cm}C{.355cm}C{.355cm}C{.355cm}C{.355cm}C{.355cm}C{.355cm}C{.355cm}C{.355cm}C{.355cm}C{.355cm}}
                \arrayrulecolor{black!0} 
                \midrule
                &&&&&&&&&&&&&\\
                \midrule
                &&&&&&&&&&&&&\\
                \midrule
                \arrayrulecolor{black} 
                &\texttt{\$}&\multicolumn{6}{|c|}{\texttt{c}}&\multicolumn{6}{c}{\texttt{d}}\\
                \toprule
                    $i\hphantom{]}$&0&1&2&3&4&5&6&7&8&9&10&11&12\\
                \midrule
                $\mathsf{T}\lbrack i\rbrack$&\texttt{c}&\texttt{d}&\texttt{c}&\texttt{d}&\texttt{c}&\texttt{d}&\texttt{c}&\texttt{d}&\texttt{c}&\texttt{c}&\texttt{d}&\texttt{d}&\texttt{\$}\\
                \midrule
                    $\mathsf{SA}\lbrack i\rbrack$&\cellcolor{black!10}3&\cellcolor{black!10}0&\cellcolor{black!10}$\cellcolor{black!10}\tilde 1$&$\cellcolor{black!10}\tilde 2$&\cellcolor{black!10}4&0&0&0&\cellcolor{black!25}0&\cellcolor{black!25}2&\cellcolor{black!25}4&\cellcolor{black!25}6&\cellcolor{black!25}9\\
                \bottomrule
            \end{tabular}
          \label{fig:sa_sorted_substrings}%
       }\hphantom{~~~~~}
       \subfloat[]{%
            \begin{tabular}{lll}
                \toprule
                    Ref. & Text Pos. & \TypeBs-substring\\
                \midrule
                    3& 6   &\texttt{cdcc}\\
                \midrule
                    0& 0   &\texttt{cdcd}\\
                \midrule
                    1& 2   &\texttt{cdcd}\\
                \midrule
                    2& 4   &\texttt{cdcd}\\
                \midrule
                    4& 9   &\texttt{cdd\$}\\
                \bottomrule
            \end{tabular}
          \label{fig:sorted_substrings}%
       }
       \caption{The lexicographically sorted references of the \TypeBs-substrings in \SA{0..\m-1} (light gray $\color{black!10}{\blacksquare}$ in \protect\subref{fig:sa_sorted_substrings}). For readability we write $\tilde i$ if $i$ is bitwise negated ($\tilde i < 0$ for all $0\leq i\leq \n$). The content of the buckets
       is not changed in this step.
       The references, their corresponding text positions and the \TypeBs-substrings are shown in
       \protect\subref{fig:sorted_substrings}.}
       \label{fig:sorted_substrings_all}
\end{figure} 

\begin{figure}[t]
    \centering
            \begin{tabular}{rC{.5cm}C{.5cm}C{.5cm}C{.5cm}C{.5cm}C{.5cm}C{.5cm}C{.5cm}C{.5cm}C{.5cm}C{.5cm}C{.5cm}C{.5cm}}
                \toprule
                    $i\hphantom{]}$&0&1&2&3&4&5&6&7&8&9&10&11&12\\
                \midrule
                $\mathsf{T}\lbrack i\rbrack$&\texttt{c}&\texttt{d}&\texttt{c}&\texttt{d}&\texttt{c}&\texttt{d}&\texttt{c}&\texttt{d}&\texttt{c}&\texttt{c}&\texttt{d}&\texttt{d}&\texttt{\$}\\
                \midrule
                    $\mathsf{SA}\lbrack i\rbrack$&\cellcolor{black!10}\textbf{-1\hphantom{-}}&\cellcolor{black!10}0&\cellcolor{black!10}1&\cellcolor{black!10}2&\cellcolor{black!10}\textbf{-1\hphantom{-}}&\cellcolor{black!25}3&\cellcolor{black!25}3&\cellcolor{black!25}3&\cellcolor{black!25}0&\cellcolor{black!25}4&4&6&9\\
                \bottomrule
            \end{tabular}%
       \caption{\ISAb{} contains the inverse suffix array of the sorted \TypeBs-substrings. $\ISAb{i}=\SA{\m+i}$ for all $0\leq i< \m$ (dark gray $\color{black!25}{\blacksquare}$ in \SA{}). If $\m>\frac{\n}{3}$, \ISAb{} overlaps with \PAb{}. This does not matter, as we do not require the text positions at this point any more.
       While computing \ISAb{}, we also mark completely sorted intervals in \SA{0..\m-1}. The leftmost position of a sorted interval of length $\ell$ is changed to -$\ell$ (see \SA{0} and \SA{4} where we store -1 as the sorted intervals contain one entry).}
       \label{fig:partial_isab}
\end{figure}


\subsubsection{Sorting the \TypeBs-Suffixes.} 
\label{ssub:computing_the_isa_for_the_ssa}
In the last part of the \TypeBs-suffix sorting in DivSufSort we compute the correct ranks of all \TypeBs-suffixes and store them in \ISAb{}.
During this step, we only require information about the ranks of the suffixes and have no random access to the text, i.e., \PAb{} is not required any more.
All line numbers in this section refer to \textsf{trsort.c}.
Using \ISAb{}, we compute the ranks of all \TypeBs-suffixes using an approach similar to prefix doubling \cite{Larsson2007}.
Instead of doubling the length of the suffixes we double the number of considered \TypeBs-substrings that can have an arbitrary length (line~563).
Here, $\mathsf{ISAd}\lbrack i\rbrack$ refers to the rank of the $i+2^k$-th \TypeBs-suffix, where $k$ is the current iteration of the doubling algorithm.
Obviously, we need to update the ranks when we double the number of considered substrings, i.e., compute the new ranks for the \TypeBs-suffixes.
Since the ranks in the \ISA{} are given in text order, we can access the rank of the next (in text order) \TypeBs-substring for any given substring.

\subsubsection{Repetition Detection.}
The sorting that uses the new ranks as keys is done using Quicksort (QS), which also allows us to use the \textit{repetition} detection introduced by Maniscalco and Puglisi~\cite{Maniscalco2007} (see line~452 for the identification and the function \textsf{tr\_copy} for the computation of the correct ranks).
A repetition in \Text{} is a substring $\Text{i,i+rp}$ with $r\geq 2, p\geq 0$ and $i,i+rp\in\lbrack0,\n)$ such that $\Substring{i,i+p}=\Substring{i+p,i+2p}=\dots=\Substring{i+(r-1)p,i+rp}$.
Those repetitions are a problem if $S_{i}$ is a \TypeBs-suffix, since then $S_{kp}$ is a $B^{\star}$-suffix for all $k\leq r$.
We can simply sort all those suffixes by looking at the first character not belonging to the repetition ($\Text{i+rp+l}\neq\Text{i+l}$). If $\Text{i+rp+l}\llex\Text{i+l}$ then $\Text{i+(r-1)p+1,i+rp}\llex\Text{(i-1)+(r-1)p+1,(i-1)+rp}$ for all $1<i\leq r$.
The analogous case is true for $\Text{i+rp+l}\glex\Text{i+l}$, i.e., $\Text{i+(r-1)p+1,i+rp}\glex\Text{(i-1)+(r-1)p+1,(i-1)+rp}$ for all $1<i\leq r$.
This is done in lines~276~(and~282), where we increase (and decrease) the ranks of all suffixes in the repetition.
The identification of a repetition is supported by QS.
QS divides each interval into three subintervals (like MKQS).
We chose the median rank of the \TypeBs-suffixes that are considered during this doubling step as the pivot element for QS (line~455).
If the (current) rank of the first \TypeBs-suffix in the subinterval (considered in this doubling step) is equal to the pivot element, i.e., $\ISAb{i}=\mathsf{ISAd}\lbrack i\rbrack$ where $i$ is the first \TypeBs-suffix in the interval, then we have found a repetition (line~452, where \textsf{tr\_ilg} denotes the logarithm, i.e., the number of iterations until HS is used instead of QS).

Now we have computed the \ISA{} of all \TypeBs-suffixes (stored in \ISAb{}), i.e., we have all \TypeBs-suffixes in lexicographic order.
From this point on, all line numbers refer to \textsf{divsufsort.c}, again.
Next (see loop starting at line~162), we scan \Text{} from right to left, and when we read the $i$-th \TypeBs-suffix at position $j$, we store $j$ at position \SA{\ISAb{i}}.
Since we use the \TypeBs-suffixes to induce the \TypeB-suffixes (and we do not want to induce \TypeA-suffixes during the first inducing phase) we store the bitwise negation of $j$ if $S_{j-1}$ has type \TypeA{} (line~167).
Figures~\ref{fig:isab_fully_computed} and~\ref{fig:sa_final_text_positions} show the transition in $\SA{}[0..\m)$ for our example.
Now, $\SA{}[0..\m)$ contains the text positions of all \TypeBs-suffixes in lexicographic order.
Next (see loop beginning at line~173), we need to put these text positions at their correct position in $\SA{}[0..\n)$ (line~182).
While doing so, we update \BB{} and \BBS{} such that they contain the rightmost position of the corresponding buckets (lines~177 and~185).
Figures~\ref{fig:sa_moved_final_text_positions} and~\ref{fig:buckets_after_sorting_bs_suffixes} show this step for our running example.

\begin{figure}[t]
    \centering
       \subfloat[]{%
            \begin{tabular}{rC{.3275cm}C{.3275cm}C{.3275cm}C{.3275cm}C{.3275cm}C{.3275cm}C{.3275cm}C{.3275cm}C{.3275cm}C{.3275cm}C{.3275cm}C{.3275cm}C{.3275cm}}
                \toprule
                    $i\hphantom{]}$&0&1&2&3&4&5&6&7&8&9&10&11&12\\
                \midrule
                $\mathsf{T}\lbrack i\rbrack$&\texttt{c}&\texttt{d}&\texttt{c}&\texttt{d}&\texttt{c}&\texttt{d}&\texttt{c}&\texttt{d}&\texttt{c}&\texttt{c}&\texttt{d}&\texttt{d}&\texttt{\$}\\
                \midrule
                    $\mathsf{SA}\lbrack i\rbrack$&-1&-4&1&0&-1&\cellcolor{black!25}3&\cellcolor{black!25}2&\cellcolor{black!25}1&\cellcolor{black!25}0&\cellcolor{black!25}4&4&6&9\\
                \bottomrule
            \end{tabular}
          \label{fig:isab_fully_computed}%
       }
       \subfloat[]{%
            \begin{tabular}{C{.3275cm}C{.3275cm}C{.3275cm}C{.3275cm}C{.3275cm}C{.3275cm}C{.3275cm}C{.3275cm}C{.3275cm}C{.3275cm}C{.3275cm}C{.3275cm}C{.3275cm}}
                \toprule
                   0&1&2&3&4&5&6&7&8&9&10&11&12\\
                \midrule
                \texttt{c}&\texttt{d}&\texttt{c}&\texttt{d}&\texttt{c}&\texttt{d}&\texttt{c}&\texttt{d}&\texttt{c}&\texttt{c}&\texttt{d}&\texttt{d}&\texttt{\$}\\
                \midrule
                   \cellcolor{black!10}$\tilde 6$&\cellcolor{black!10}$\tilde 4$&\cellcolor{black!10}$\tilde 2$&\cellcolor{black!10}0&\cellcolor{black!10}9&\cellcolor{black!25}3&\cellcolor{black!25}2&\cellcolor{black!25}1&\cellcolor{black!25}0&\cellcolor{black!25}4&4&6&9\\
                \bottomrule
            \end{tabular}
          \label{fig:sa_final_text_positions}%
       }\\[.2em]
       \subfloat[]{%
            \begin{tabular}{rC{.355cm}C{.355cm}C{.355cm}C{.355cm}C{.355cm}C{.355cm}C{.355cm}C{.355cm}C{.355cm}C{.355cm}C{.355cm}C{.355cm}C{.355cm}}
                \arrayrulecolor{black!0} 
                \midrule
                \arrayrulecolor{black} 
                &\texttt{\$}&\multicolumn{6}{|c|}{\texttt{c}}&\multicolumn{6}{c}{\texttt{d}}\\
                \toprule
                    $i\hphantom{]}$&0&1&2&3&4&5&6&7&8&9&10&11&12\\
                \midrule
                $\mathsf{T}\lbrack i\rbrack$&\texttt{c}&\texttt{d}&\texttt{c}&\texttt{d}&\texttt{c}&\texttt{d}&\texttt{c}&\texttt{d}&\texttt{c}&\texttt{c}&\texttt{d}&\texttt{d}&\texttt{\$}\\
                \midrule
                    $\mathsf{SA}\lbrack i\rbrack$&$\tilde 6$&$\tilde 4$&$\cellcolor{black!10}\tilde 6$&\cellcolor{black!10}$\tilde 4$&\cellcolor{black!10}$\tilde 2$&\cellcolor{black!10}0&\cellcolor{black!10}9&1&0&4&4&6&9\\
                \bottomrule
            \end{tabular}
          \label{fig:sa_moved_final_text_positions}%
       }\hphantom{~}
       \subfloat[]{%
            \begin{tabular}{lccccc}
                \toprule
                                        &\texttt{\$}&\texttt{c}&\texttt{d} &(\texttt{c,c})&(\texttt{c,d})\\
                \midrule
                    \textsf{BUCKET\_A}    &0&1&7&-&-\\
                \midrule
                    \textsf{BUCKET\_B}    &-&-&-&\textbf{1}&\textbf{6}\\
                \midrule
                    \textsf{BUCKET\_BSTAR}   &-&-&-&-&\textbf{1}\\
                \bottomrule
            \end{tabular}
          \label{fig:buckets_after_sorting_bs_suffixes}%
       }
       \caption{\ISAb{} (dark gray $\color{black!25}{\blacksquare}$ in \protect\subref{fig:isab_fully_computed} and \protect\subref{fig:sa_final_text_positions}) contains the ranks of all \TypeBs-suffixes.
       The lexicographically sorted text positions of the \TypeBs-suffixes are shown light gray ($\color{black!10}{\blacksquare}$) in \protect\subref{fig:sa_final_text_positions}.
       Each text position $i$ is bitwise negated if $S_{i-1}$ has type \TypeA.
       In \protect\subref{fig:sa_moved_final_text_positions} all text positions of the \TypeBs-suffixes are at their correct position in \SA{0..\n-1} (light gray $\color{black!10}{\blacksquare}$).
       The buckets \protect\subref{fig:buckets_after_sorting_bs_suffixes} contain the leftmost position of the corresponding suffixes.}
\end{figure} 



\subsection{Inducing the $A$- and $B$-suffixes} 
\label{sub:inducing_the_a_and_b_suffixes}
Due to the types of the suffixes, we know that in any $(\cZ,\cO)$-bucket the \TypeA-suffixes are lexicographically smaller than the \TypeB-suffixes, and that \TypeBs-suffixes are lexicographically smaller than \TypeB-suffixes.
We also know that in lexicographic order, all consecutive intervals of \TypeB-suffixes are left of at least one \TypeBs-suffix and all \TypeA-suffixes are right of at least one \TypeB-suffix -- see Figure~\ref{fig:example_of_suffix_type_order}.
Now we scan \SA{} twice: once from right to left where all \TypeB-suffixes are induced (we can skip all parts of \SA{} containing only \TypeA-suffixes), and then from left to right to induce all \TypeA-suffixes (see Figure~\ref{fig:sa_inducing} for an example of the entire inducing process).
All following line numbers refer to \textsf{difsufsort.c}.
A step-by-step example is given in Figure~\ref{fig:sa_inducing}.

\begin{figure}
    \centering
    \begin{tabular}{rC{.45cm}C{.45cm}C{.45cm}C{.45cm}C{.45cm}C{.45cm}C{.45cm}C{.45cm}C{.45cm}C{.45cm}C{.45cm}C{.45cm}C{.45cm}C{.5cm}C{.4cm}C{.45cm}C{.45cm}C{.45cm}C{.45cm}C{.45cm}} 
        &&&&&&&&&&&&&&&\multirow{1}{*}{\rot{\BA{\texttt{\$}}\hphantom{\textsf{BSTR},\texttt{d}}}} & \multirow{1}{*}{\rot{\BA{\texttt{c}}\hphantom{\textsf{BSTR},\texttt{d}}}} & \multirow{1}{*}{\rot{\BA{\texttt{d}}\hphantom{\textsf{BSTR},\texttt{d}}}} & \multirow{1}{*}{\rot{\BB{\texttt{c},\texttt{c}}\hphantom{\textsf{STAR}}}} & \multirow{1}{*}{\rot{\BBS{\texttt{c},\texttt{d}}}} &\\
        &&&&&&&&&&&&&&&&&&\\
        &&&&&&&&&&&&&&&&&&\\
        &&&&&&&&&&&&&&&&&&\\
        &&&&&&&&&&&&&&&&&&\\
        &&&&&&&&&&&&&&&&&&\\
        &&\multicolumn{6}{|c|}{Scanned Interval}&&&&&&&&&&&&  \\
        \cmidrule{1-14}
        $i\hphantom{]}$&0&1&2&3&4&5&6&7&8&9&10&11&12&&&&&&&  \\
        \cmidrule[\heavyrulewidth]{1-14}\cmidrule[\heavyrulewidth]{16-21}
            $\mathsf{SA}\lbrack i\rbrack$&$\tilde 6$&$\tilde 4$&$\tilde 6$&$\tilde 4$&$\tilde 2$&0&9&1&0&4&4&6&9&&0&1&\textSq{7}&1&\textSq{1}&\multirow{8}{*}{\rotRight{\textbf{~First Induction Phase}}}\\
        \cmidrule{1-14}\cmidrule{16-20}
            $\mathsf{SA}\lbrack i\rbrack$&$\tilde 6$&\cellcolor{black!25}$\tilde 4$&$\tilde 6$&$\tilde 4$&$\tilde 2$&0&\cellcolor{black!10}9&1&0&4&4&6&9&&0&1&7&\cellcolor{black!25}1&1&\\
        \cmidrule{1-14}\cmidrule{16-20}
            $\mathsf{SA}\lbrack i\rbrack$&$\tilde 6$&$\mathbf{\tilde 8}$&$\tilde 6$&$\tilde 4$&$\tilde 2$&\cellcolor{black!10}0&$\mathbf{\tilde 9}$&1&0&4&4&6&9&&0&1&7&\textbf{0}&1&\\
        \cmidrule{1-14}\cmidrule{16-20}
            $\mathsf{SA}\lbrack i\rbrack$&$\tilde 6$&$\tilde 8$&$\tilde 6$&$\tilde 4$&$\cellcolor{black!10}\tilde 2$&$\mathbf{\tilde 0}$&$\tilde 9$&1&0&4&4&6&9&&0&1&7&0&1&\\
        \cmidrule{1-14}\cmidrule{16-20}
            $\mathsf{SA}\lbrack i\rbrack$&$\tilde 6$&$\tilde 8$&$\tilde 6$&$\cellcolor{black!10}\tilde 4$&\textbf{2}&$\tilde 0$&$\tilde 9$&1&0&4&4&6&9&&0&1&7&0&1&\\
        \cmidrule{1-14}\cmidrule{16-20}
            $\mathsf{SA}\lbrack i\rbrack$&$\tilde 6$&$\tilde 8$&$\cellcolor{black!10}\tilde 6$&\textbf{4}&2&$\tilde 0$&$\tilde 9$&1&0&4&4&6&9&&0&1&7&0&1&\\
        \cmidrule{1-14}\cmidrule{16-20}
            $\mathsf{SA}\lbrack i\rbrack$&$\tilde 6$&$\cellcolor{black!10}\tilde 8$&\textbf{6}&4&2&$\tilde 0$&$\tilde 9$&1&0&4&4&6&9&&0&1&7&0&1&\\
        \cmidrule{1-14}\cmidrule{16-20}
            $\mathsf{SA}\lbrack i\rbrack$&$\tilde 6$&\textbf{8}&6&4&2&$\tilde 0$&$\tilde 9$&1&0&4&4&6&9&&0&1&7&0&1&\\
        \cmidrule{1-14}\cmidrule{16-21}
        \addlinespace[.025em]
        \cmidrule{1-14}\cmidrule{16-21}
            $\mathsf{SA}\lbrack i\rbrack$&\cellcolor{black!25}\textbf{12}&8&6&4&2&$\tilde 0$&$\tilde 9$&1&0&4&4&6&9&&\cellcolor{black!25}0&1&7&0&1&\\
        \cmidrule{1-14}\cmidrule{16-21}
        \addlinespace[.025em]
        \cmidrule{1-14}\cmidrule{16-21}
            $\mathsf{SA}\lbrack i\rbrack$&\cellcolor{black!10}12&8&6&4&2&$\tilde 0$&$\tilde 9$&\cellcolor{black!25}1&0&4&4&6&9&&\textbf{1}&1&\cellcolor{black!25}7&0&1&\multirow{9}{*}{\rotRight{\textbf{~Second Induction Phase}}}\\
        \cmidrule{1-14}\cmidrule{16-20}
            $\mathsf{SA}\lbrack i\rbrack$&12&\cellcolor{black!10}8&6&4&2&$\tilde 0$&$\tilde 9$&\textbf{11}&\cellcolor{black!25}0&4&4&6&9&&1&1&\cellcolor{black!25}\textbf{8}&0&1&\\
        \cmidrule{1-14}\cmidrule{16-20}
            $\mathsf{SA}\lbrack i\rbrack$&12&8&\cellcolor{black!10}6&4&2&$\tilde 0$&$\tilde 9$&11&\textbf{7}&\cellcolor{black!25}4&4&6&9&&1&1&\cellcolor{black!25}\textbf{9}&0&1&\\
        \cmidrule{1-14}\cmidrule{16-20}
            $\mathsf{SA}\lbrack i\rbrack$&12&8&6&\cellcolor{black!10}4&2&$\tilde 0$&$\tilde 9$&11&7&\textbf{5}&\cellcolor{black!25}4&6&9&&1&1&\cellcolor{black!25}\textbf{10}&0&1&\\
        \cmidrule{1-14}\cmidrule{16-20}
            $\mathsf{SA}\lbrack i\rbrack$&12&8&6&4&\cellcolor{black!10}2&$\tilde 0$&$\tilde 9$&11&7&5&\textbf{3}&\cellcolor{black!25}6&9&&1&1&\cellcolor{black!25}\textbf{11}&0&1&\\
        \cmidrule{1-14}\cmidrule{16-20}
            $\mathsf{SA}\lbrack i\rbrack$&12&8&6&4&2&\cellcolor{black!10}$\tilde 0$&$\tilde 9$&11&7&5&3&\textbf{1}&9&&1&1&\textbf{12}&0&1&\\
        \cmidrule{1-14}\cmidrule{16-20}
            $\mathsf{SA}\lbrack i\rbrack$&12&8&6&4&2&\textbf{0}&\cellcolor{black!10}$\tilde 9$&11&7&5&3&1&9&&1&1&12&0&1&\\
        \cmidrule{1-14}\cmidrule{16-20}
            $\mathsf{SA}\lbrack i\rbrack$&12&8&6&4&2&0&\textbf{9}&\cellcolor{black!10}11&7&5&3&1&\cellcolor{black!25}9&&1&1&\cellcolor{black!25}12&0&1&\\
        \cmidrule{1-14}\cmidrule{16-20}
            $\mathsf{SA}\lbrack i\rbrack$&12&8&6&4&2&0&9&11&7&5&3&1&\textbf{10}&&1&1&\textbf{13}&0&1&\\
        \bottomrule
    \end{tabular}
    \caption{During the first phase, we induce \TypeB-suffixes and only scan intervals where \TypeB- and \TypeBs-suffixes occur.
    Each of those intervals ends left of the succeeding $\cZ$-bucket.
    Its borders are stored in the corresponding \BBS{} (boxed entries, the right border is not part of the interval).
    After the first phase we put the last suffix at the beginning of its corresponding bucket.
    During the second phase we scan the whole array, as we also store the bitwise negation of all entries that have already been used for inducing.
    The currently considered entry is marked light gray ($\color{black!10}{\blacksquare}$).
    The entries highlighted dark gray ($\color{black!25}{\blacksquare}$) are the positions where a value is induced. The bucket that contains the position is highlighted in the same color. Entries that have changed are bold in the following row.}
    \label{fig:sa_inducing}%
\end{figure}

During the inducing of the \TypeB-suffixes, i.e., the first scan of \SA{} (see loop starting at line~205), whenever we read an entry $i$ in \SA{} such that $i > 0$ (line~211), we store the entry $i-1$ at the rightmost free position (a position in which a correct text position has not been stored yet) in the $(\Text{i-1},\Text{i})$-bucket (line~220).
If $\Text{i-2}\glex\Text{i-1}$, then $S_{i-2}$ is an \TypeA-suffix, which is not induced during the first scan, but the bitwise negated value of $i-1$ is stored instead (line~217).
Every position is overwritten with its bitwise negated value.
If the position was already bitwise negated, i.e., it has been induced and the corresponding suffix has type \TypeA{}, it is considered during the next scan (line~226) and it is ignored otherwise.
After the first traversal, all suffixes that have been used for inducing are represented by their bitwise negated position whereas all other suffixes are represented by their position, i.e., a positive integer.
It should be noted that all induced suffixes are lexicographically smaller than the suffix they are induced from: if we induce from a $\left(\cZ,\cO\right)$-bucket, we know that $\cZ\leqlex \cO$, since we are considering \TypeB-suffixes.
In addition, we can only induce in $\left(\cZ,\cO\right)$-buckets with $\cO\leqlex \cZ$, as only \TypeB-suffixes are considered during this traversal.

Before \SA{} is scanned a second time, $\n-1$ is stored at the beginning of the \Text{\n-1}-bucket (line~234).
If $S_{\n-2}$ has type \TypeA, we store $\n-1$ (we want to induce $S_{\n-2}$ during the second scan).
Otherwise, we store the bitwise negation of $\n-1$.

\begin{wrapfigure}{r}{0.585\textwidth}
  \centering
  \includegraphics[scale=1.2]{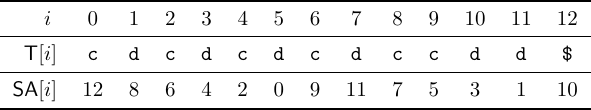}
  \caption{The final \SA{} of $\Text{}=\mathtt{cdcdcdcdccdd\$}$.}
  \label{fig:final_sa}
\end{wrapfigure}
During the second scan of \SA{} (see loop starting at line~236), whenever an entry $i$ of \SA{} is smaller than 0 it is overwritten by its bitwise negated value, i.e., the position of the suffix in the correct position in the suffix array (line~249).
Whenever $i>0$ (line~237) the suffix $S_{i-1}$ is induced at the leftmost free position in the $\Text{i-1}$-bucket (line~243).
Since all remaining suffixes are induced during this scan it is sufficient to identify the border using the $\cZ$-buckets, i.e., the value stored in \BA{\cZ}.
If the induced suffix would induce a \TypeB-suffix, its bitwise negated value is induced instead (line~240).
At the end of the traversal \SA{} contains the indices of all suffixes in lexicographic order.


\section{Inducing the \LCP{}-Array} 
\label{sec:inducing_the_lcp}
We now show how to modify DivSufSort such that it also computes the \LCP{}-array in addition to \SA{}.
To do so, we extend DivSufSort at three points of the computation of \SA{}.
First, we need to compute the \LCP{}-values of all \TypeBs-suffixes.
Next, during the inducing step, we also induce the \LCP{}-values for \TypeA- and \TypeB-suffixes.
For this we utilize a technique also described in~\cite{Fischer2011,Bingmann2013a} that allows us to answer \RMQ{}{}s on \LCP{} using only a stack~\cite{Gog2011}.
Last, we compute the \LCP{}-values of suffixes at the border of buckets, as those values cannot be induced.

Recall that the \LCP{}-value of two arbitrary suffixes $S_i$ and $S_j$ is denoted by \lcp{i}{j}. We need the following additional definition:
Given an array $A$ of length $\ell$ and $0\leq i\leq j \leq\ell$, a \emph{range minimum query} \RMQ{i,j}{A} asks for the minimum in $A$ in the interval $\lbrack i,j\rbrack$, in symbols: $\RMQ{i,j}{A}=\min\left\{A[k]\colon i\leq k\leq j\right\}$.

\subsection{Computing the \LCP{}-Values of the \TypeBs-Suffixes} 
\label{sub:sorting_the_b_star_suffixes_and_compute_their_lcp_values}
During the sorting of the \TypeBs-suffixes (right before the \TypeBs-suffixes are put at their correct position in $\SA{}[0..\n)$), all lexicographically sorted \TypeBs-suffixes are in $\SA{}[0..\m)$.
There are two cases regarding $\m$ (the number of \TypeBs-suffixes).
If $\m>\frac{\n}{3}$, we have overwritten the text positions of the \TypeBs-suffixes in \PAb{} with \ISAb{}.
In this case we must compute the \LCP{}-values naively.\footnote{For all tested instances (see Section~\ref{sec:experiments}) $\m\leq\frac{\n}{3}$.}
Otherwise (we still know the text positions of all \TypeBs-suffixes), we compute their \LCP{}-values using a sparse version of the $\Phi$-algorithm~\cite{Karkkainen2009a}, based on Observation~\ref{obs:example_for_obs_lcp}, which was also used implicitly in~\cite{Fischer2011,Bingmann2013a}.

\begin{observation}
   \label{obs:example_for_obs_lcp}
    If $S_i,S_{i^{\prime}},S_j$ and $S_{j^{\prime}}$ are \TypeBs-suffixes such that $i<i^{\prime},j<j^{\prime}$ and there is no other $B^{\star}$-suffix $S_{k}$ such that $i<k<i^{\prime}$ or $j<k<j^{\prime}$, then $\lcp{i^{\prime}}{j^{\prime}}\geq\lcp{i}{j}-(i^{\prime} - i)$.
\end{observation}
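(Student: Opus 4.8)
The plan is to split off a trivial regime and then transport the suffix--type classification across the common prefix of $S_i$ and $S_j$. Write $\ell := \lcp{i}{j}$ and $d := i' - i$. If $\ell \leq d$ there is nothing to prove, since $\lcp{i'}{j'} \geq 0 \geq \ell - d$. So from now on I would assume $\ell > d$; then $S_i$ and $S_j$ agree on their first $\ell$ characters, and in particular $\Text{i+t} = \Text{j+t}$ for all $0 \leq t \leq d$.

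The heart of the argument is to show that $j' = j + d$, i.e.\ that the first $B^\star$-suffix after $S_j$ in text order sits at the same offset from $j$ that $S_{i'}$ sits from $i$. I would derive this from the claim that $S_{i+t}$ and $S_{j+t}$ have the same type for $0 \leq t \leq d$. The type of a suffix $S_k$ is decided by $\Text{k},\Text{k+1},\dots$ only up to the first index at which two consecutive characters differ, and — this is the key structural input — since $S_{i'} = S_{i+d}$ is a $B^\star$-suffix we have $\Text{i+d} \llex \Text{i+d+1}$ \emph{strictly}: a $B^\star$-suffix is of type \TypeB{} with a type-\TypeA{} successor, so $\Text{i+d} \eqlex \Text{i+d+1}$ would force $S_{i+d}$ to share the type of $S_{i+d+1}$, a contradiction. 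Hence no maximal run of equal characters straddles position $i+d$, so for every $0 \le t \le d$ the type of $S_{i+t}$ is already decided inside the portion where $S_i$ and $S_j$ still agree, and the identical decision is made on the $j$-side. Consequently the $B^\star$-suffixes among $S_i,\dots,S_{i+d}$ and those among $S_j,\dots,S_{j+d}$ occupy the very same offsets; since $S_{i+d}=S_{i'}$ is $B^\star$ and none of $S_{i+1},\dots,S_{i+d-1}$ is, we get that $S_{j+d}$ is $B^\star$, none of $S_{j+1},\dots,S_{j+d-1}$ is, and therefore $j'=j+d$. Finishing is then immediate: because $S_i$ and $S_j$ share a prefix of length $\ell$, the suffixes $S_{i'}=S_{i+d}$ and $S_{j'}=S_{j+d}$ share a prefix of length at least $\ell - d$, so $\lcp{i'}{j'} \geq \ell - d = \lcp{i}{j} - (i'-i)$.

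I expect the middle step to be the delicate one. Transporting the type classification, and with it the position of the next $B^\star$-suffix, from the $i$-side to the $j$-side needs the common window to be long enough to pin down the types of $S_{i+d}$ and $S_{j+d}$ \emph{themselves} — not merely those of the interior suffixes $S_{i+1},\dots,S_{i+d-1}$ — and to certify that $S_{j+d}$ really is $B^\star$. Here one has to use the characteristic ``descending type-\TypeA{} run followed by an ascending type-\TypeB{} run'' shape that the classification forces on the stretch between two consecutive $B^\star$-suffixes, and treat the right boundary of the agreeing window with care.
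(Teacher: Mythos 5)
Your skeleton is sound — the split into the trivial regime $\ell\leq d$ and the shift argument $\lcp{i+d}{j+d}\geq\ell-d$ are both fine — and you correctly isolate the crux as the claim $j'=j+d$. But that crux is a genuine gap, and you leave it unproven for exactly the reason you yourself flag. The strictness $\Text{i+d}\llex\Text{i+d+1}$ only pins down the comparison \emph{ending} at offset $d+1$; to transport it to the $j$-side you need $\Text{j+d+1}=\Text{i+d+1}$, which requires $\ell\geq d+2$, not merely $\ell>d$. Worse, even when $S_{j+d}$ is certified to have type \TypeB{}, being \TypeBs{} additionally requires $S_{j+d+1}$ to have type \TypeA{}, and that type is decided by the first strict inequality among consecutive characters at or after offset $d+1$ — nothing in the hypotheses confines that deciding position to the window where $S_i$ and $S_j$ agree. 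The gap cannot be closed, because the statement as written is false in this regime: take $\Text{}=\mathtt{acabaacace\$}$, whose \TypeBs-positions are $0,2,5,8$. With $i=0$, $i'=2$, $j=5$, $j'=8$ all hypotheses hold and $\lcp{0}{5}=3$, $i'-i=2$, yet $\lcp{2}{8}=0<1$; here $S_7$ has type \TypeB{} but so does $S_8$, so $S_7$ is not \TypeBs{} and the next \TypeBs-suffix after $j$ sits at offset $3$, not $2$.

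For comparison: the paper supplies no proof at all — the observation is justified only by Figure~\ref{fig:example-phi}, whose caption tacitly assumes $j'-j=i'-i$ and even asserts equality — so you are not missing an argument the authors give; rather, your honest attempt exposes that the observation needs a stronger hypothesis. The statement is only ever applied in Algorithm~\ref{alg:sparse-lcp} with $S_j=S_{\PHI{i}}$ the \emph{lexicographically} adjacent \TypeBs-suffix of $S_i$ (note that in the counterexample $S_j\glex S_i$ and the two are not lexicographic neighbours among the \TypeBs-suffixes). The ordering $S_j\llex S_i$ forces $\Text{j+\ell}\llex\Text{i+\ell}$ at the first mismatch, which is precisely the extra information needed to rule out the bad boundary configurations above; the clean route then argues via ``the lcp with the immediate lexicographic predecessor dominates the lcp with any smaller sampled suffix'' rather than via a purely positional transport of types. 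If you want to keep your positional argument, you must either add that lexicographic hypothesis or strengthen $\ell>d$ to a condition guaranteeing that the comparison deciding the type of $S_{j+d+1}$ falls inside the common prefix.
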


This is possible as we know the distance (in the text) of two \TypeBs-suffixes, i.e., $\PAb{i}-\PAb{j}$ is the distance of the $i$-th and $j$-th \TypeBs-suffix with $1\leq i\leq j\leq \m$.
See Figure~\ref{fig:example-phi} for an Example.
Algorithm~\ref{alg:sparse-lcp} shows the \emph{sparse} version of the $\Phi$-algorithm.
The difference to the original algorithm \cite{Karkkainen2009a} is that the next considered suffix is an arbitrary number of character shorter than the previous one, which results in Observation~\ref{obs:example_for_obs_lcp}.
The computation of the \LCP{}-values does not require any additional memory except for the $\n$ words for \LCP{}, where we temporarily store additional data.

First (lines~\ref{alg:sparse-lcp-fill-phi-begin} to \ref{alg:sparse-lcp-fill-phi-end} of Algorithm~\ref{alg:sparse-lcp}), we fill the \PHI{} (stored in $\LCP{}[\m..2\m)$) such that $\PHI{i}$ contains the text position of the suffix that is lexicographically consecutive to the $i$-th suffix (text position).
In $\DELTA{i}$ (stored in $\LCP{}[\n-\m..\n)$) we store the text distance of the $i$-th and $(i+1)$-th \TypeBs-suffix (text occurrence), i.e., $\PAb{i+1}-\PAb{i}$.
Then (lines \ref{alg:sparse-lcp-computation-begin} to \ref{alg:sparse-lcp-computation-end}), we compute the sparse \LCP{}-array using Observation~\ref{obs:example_for_obs_lcp}.
As we store the \LCP{}-values in \PHI{} in text order, we need to rewrite them to \LCP{} (line~\ref{alg:sparse-lcp-computation-rewrite}).

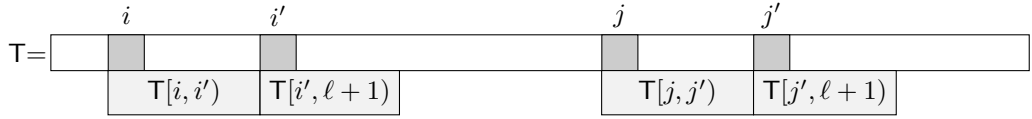
\begin{figure}[t]
   \centering
   \begin{tikzpicture}
      \node[rectangle split, rectangle split parts=30, rectangle split horizontal, draw, anchor=center, rectangle split draw splits=false, minimum height=.475cm] (t1) at (0,0) { };
      \node[left of=t1, node distance=6.75cm] {$\Text{}=$};
      \node[rectangle, draw, fill=black!20, minimum size=.475cm] (i) at ($(t1.one west) + (1, 0)$) {};
      \node[rectangle, draw, fill=black!20, minimum size=.475cm] (ip) at ($(t1.one west) + (3, 0)$) {};
      \node[rectangle, draw, fill=black!20, minimum size=.475cm] (j) at ($(t1.one west) + (7.5, 0)$) {};
      \node[rectangle, draw, fill=black!20, minimum size=.475cm] (jp) at ($(t1.one west) + (9.5, 0)$) {};

      \node[above of=i, node distance=1.4em] {$i$};
      \node[above of=ip, node distance=1.4em] {$i^{\prime}$};
      \node[above of=j, node distance=1.4em] {$j$};
      \node[above of=jp, node distance=1.4em] {$j^{\prime}$};

      \node[rectangle, draw, fill=black!5, minimum size=.475cm, minimum width=2cm] (lcp-i-j-a) at ($(t1.one west) + (1.76, -.53)$) {\small $\Substring{i,i^{\prime}}$};
      \node[rectangle, draw, fill=black!5, minimum size=.475cm, minimum width=2cm] (lcp-i-j-b) at ($(t1.one west) + (8.26, -.53)$) {\small $\Substring{j,j^{\prime}}$};
      \node[rectangle, draw, fill=black!5, minimum size=.475cm, minimum width=1.5cm] (lcp-ip-jp-a) at ($(t1.one west) + (3.68, -.53)$) {\small $\Substring{i^{\prime},\ell+1}$};
      \node[rectangle, draw, fill=black!5, minimum size=.475cm, minimum width=1.5cm] (lcp-ip-jp-b) at ($(t1.one west) + (10.2, -.53)$) {\small $\Substring{j^{\prime},\ell+1}$};

    \end{tikzpicture}
    \caption{Let $S_i,S_j,S_{i^\prime}$ and $S_{j^\prime}$ be \TypeBs-suffixes
    such that there is no \TypeBs-suffix $S_k$ with $i<k<i^\prime$ or 
    $j<k<j^\prime$, and let the \LCP{}-value of $S_i$ and $S_j$ be $\ell=\lcp{i}{j} + i$.
    Then the \LCP{}-value of
    $S_{i^{\prime}}$ and $S_{j^{\prime}}$ is $\lcp{i^\prime}{j^\prime}=\ell-i^{\prime}=\lcp{i}{j} - \left(i^\prime-i\right)$.
    }
   \label{fig:example-phi} 
\end{figure}

\begin{algorithm}[b]
   \caption[]{Sparse $\Phi$-Algorithm}
   \label{alg:sparse-lcp}
    \Input{$\Text{}$, $\m$, $\SA{}$, $\ISAb{}=\SA{\m..2\m-1}$, $\PAb{}=\SA{\n-\m..\n-1}$ and
    $\LCP{}$, $\PHI{}=\LCP{\m..2\m-1}$ $\PHI{}=\DELTA{\n-\m..\n-1}$.}
   \Output{\LCP{0..\m-1} contains the \LCP{}-values of the \TypeBs-suffixes.}
    $\PHI{\SA{0}} = -1$\;\label{alg:sparse-lcp-fill-phi-begin}
   \For{$i= 1;~i\leq \m-1;~i= i+1$} 
   {
      $\PHI{\SA{i}}= \SA{i - 1}$\;
      $\DELTA{i-1}= \PAb{i}-\PAb{i+1}$\label{alg:sparse-lcp-fill-phi-end}
   }
   \For{$i = 0,~p = 0;~i < m;~i = i+1$\label{alg:sparse-lcp-computation-begin}}
   {
       \While{$\Text{\PAb{i}+p+1} = \Text{\PAb{\PHI{i}}+p+1}$}{$p= p+1$\;}
        $\PHI{i} = p$ and $p= \max \left\{ 0, p-\DELTA{i} \right\}$\label{alg:sparse-lcp-computation-end}
   }
   \lFor{$i = 0;~i < m;~i= j+1$} 
   {
       $\LCP{\ISAb{i}}= \PHI{i};$\label{alg:sparse-lcp-computation-rewrite}
   }
\end{algorithm}

\subsection{Inducing the \textsc{LCP}-Values in Addition to the \textsc{SA}} 
\label{sub:inducing_the_lcp_values_in_addition_to_the_sa}
During the inducing of the \TypeB-suffixes, whenever a suffix is induced at position $u$ in \SA{} and there is already a suffix at position $u+1$ in the same $\left(\cZ,\cO\right)$-bucket, there are two cases:
\begin{enumerate}
  \item The suffixes $S_{\SA{u}}$ and $S_{\SA{u+1}}$ have been induced from suffixes $S_{\SA{v}},S_{\SA{w}}$ in the same $\left(\cZ,\cO\right)$-bucket; in this case $\LCP{u+1}=\RMQ{v+1,w}{\LCP{}}+1$.
  \item Otherwise, the \LCP{}-value is either $1$ or $2$, depending on the $\cZ$-buckets $S_{\SA{v}},S_{\SA{w}}$ are. If they are in the same bucket the \LCP{}-value is $2$ and $1$ if not.
 \end{enumerate}
The computation of the \LCP{}-values during the inducing of the \TypeA-suffixes works analogously.
This leads to the following observation for the general case:
\begin{observation}
    Let $\SA{u}=i,\SA{u+1}=j,\SA{v}=i+1$ and $\SA{w}=j+1$ such that
    $S_i$ and $S_j$ are in the same \cZ-bucket and $u+1<v,w$ or $w,v<u$.
    Then $\LCP{u+1}=\RMQ{\min\left\{ v,w \right\} + 1,\max \left\{ v,w \right\}}{\LCP{}}+1$.
\end{observation}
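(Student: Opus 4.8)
The plan is to reduce the claim to the classical identity relating \LCP{} to range minimum queries, after stripping off a single common leading character. The only fact I would invoke as folklore---it is exactly the property exploited in \cite{Fischer2011,Bingmann2013a,Gog2011}---is that for any two positions $a<b$ of \SA{} one has $\lcp{\SA{a}}{\SA{b}}=\RMQ{a+1,b}{\LCP{}}$. I would either cite this or give the two-line argument: it follows by induction on $b-a$ from the elementary fact that for suffixes with $S_p\leqlex S_q\leqlex S_r$ one has $\lcp{p}{r}=\min\{\lcp{p}{q},\lcp{q}{r}\}$, combined with $\LCP{k+1}=\lcp{\SA{k}}{\SA{k+1}}$.

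First I would peel one character. By definition of the \LCP{}-array, $\LCP{u+1}=\lcp{\SA{u}}{\SA{u+1}}=\lcp{i}{j}$. Since $S_i$ and $S_j$ lie in the same \cZ-bucket we have $\Text{i}=\Text{j}=\cZ$, and moreover $i,j\leq\n-2$ because their text successors $S_{i+1}$ and $S_{j+1}$ occur at the positions $v$ and $w$ of the permutation \SA{}. Hence $S_i$ and $S_j$ both begin with $\cZ$, followed by $S_{i+1}$ and $S_{j+1}$ respectively, so $\lcp{i}{j}=1+\lcp{i+1}{j+1}$.

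Next I would apply the recalled identity to $S_{i+1}$ and $S_{j+1}$. From $u<u+1$, $\SA{u}=i$ and $\SA{u+1}=j$ we get $S_i\llex S_j$ (the two suffixes are distinct, as $i\neq j$), and cancelling the common first character yields $S_{i+1}\llex S_{j+1}$; since $\SA{v}=i+1$ and $\SA{w}=j+1$ this forces $v<w$, so in fact $\min\{v,w\}=v$ and $\max\{v,w\}=w$. The identity with $a=v$, $b=w$ then gives $\lcp{i+1}{j+1}=\RMQ{v+1,w}{\LCP{}}$, and combining with the previous step, $\LCP{u+1}=\RMQ{v+1,w}{\LCP{}}+1=\RMQ{\min\{v,w\}+1,\max\{v,w\}}{\LCP{}}+1$.

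The algebra here is routine; the point I would spell out---and which I regard as the real obstacle---is the role of the hypothesis $u+1<v,w$ (or $w,v<u$). For the identity above it is in fact not needed, but in the algorithm it guarantees that the entries $\LCP{k}$ with $\min\{v,w\}<k\leq\max\{v,w\}$ lie on the side of $u$ that has already been processed---to the right of $u$ during the right-to-left \TypeB-phase, giving the case $u+1<v,w$, and to the left during the left-to-right \TypeA-phase, giving the symmetric case $w,v<u$---so that this \RMQ{}{} ranges over an already-filled interval and can be answered on the fly with the stack of \cite{Gog2011}. I would therefore be careful to state this availability property explicitly, since it is exactly what justifies phrasing the observation in terms of the relative order of $u$, $v$, $w$ rather than unconditionally.
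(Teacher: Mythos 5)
Your proof is correct and is essentially the formalization of what the paper leaves implicit: the paper states this observation without a formal proof, relying on the preceding two-case discussion and the folklore identity $\lcp{\SA{a}}{\SA{b}}=\RMQ{a+1,b}{\LCP{}}$, which together with peeling the common first character $\Text{i}=\Text{j}$ is exactly your argument. Your added remark that the hypothesis $u+1<v,w$ (resp.\ $w,v<u$) is not needed for the identity itself but only guarantees that the queried \LCP{}-entries are already available during the respective induction phase is accurate and, if anything, makes the role of that hypothesis clearer than the paper does.
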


Not all \LCP{}-values can be induced this way. The missing cases are covered in the next section.
Instead of using a dynamic \RMQ{}{} data structure, we can answer the \RMQ{}{}s using a  \emph{min-stack} \cite{Bingmann2013a,Fischer2011,Gog2011}.
We only need to consider \RMQ{}{}s for suffixes from the same $\left(\cZ,\cO\right)$-bucket.
To this end, we build the min-stack while scanning an interval $[\mathsf{first},\mathsf{last}]$ (from right to left) of the \LCP{}-array.
An entry on the min-stack consist of tuple $\langle k,\LCP{k}\rangle$.
Initially, the tuple $\langle \n,-1\rangle$ is on the min-stack.
To update the min-stack at position $i\in[\mathsf{first},\mathsf{last}]$ we look at the top of the min-stack and remove the tuple $\langle k,\LCP{k}\rangle$ if $\LCP{k}\geq\LCP{i}$.
We repeat this process until no tuple is removed.
Then we add $\langle i,\LCP{i}\rangle$ to the min-stack.

Now we want to answer $\RMQ{i,j}{\LCP{}}$ with $\mathsf{first}\leq i<j\leq\mathsf{last}$.
(It should be noted that at this point we have not added $\langle i,\LCP{i}\rangle$ to the min-stack or have removed any tuple from the min-stack in the process of adding it to the min-stack.)
To this end, we scan the min-stack from top to bottom, until we find two consecutive tuples $\langle k,\LCP{k}\rangle$, $\langle k^{\prime},\LCP{k^{\prime}}\rangle$ such that $k^{\prime}>j$.
Then, $\RMQ{i,j}{\LCP{}}=\LCP{k}$.
If we scan from left to right, the min-stack works analogously.
The only difference is that the initial tuple is $\langle -1,-1\rangle$ and we search for the two consecutive tuples until $k^{\prime}<j$

The min-stack is reseted whenever we arrive at a new $\left(\cZ,\cO\right)$-bucket, i.e., we only keep the $\langle \n,-1\rangle$-tuple.
In the implementation, the min-stack is realized using a single array and a reference to its current top.

\begin{figure}
    \centering
    \subfloat[]{%
      \begin{tabular}{rccccccc}
      &&&&&&&\\
      \toprule
      $i\hphantom{]}$ & 0 & 1 & 2 & 3 & 4 & 5 & 6\\
      \midrule
      $\mathsf{A}\lbrack i\rbrack$ & 4 & 2 & 0 & 1 & 4 & 3 & 2 \\
      \bottomrule
      &&&&&&&\\
      \end{tabular}
      \label{fig:min_stack_array}%
    }~~~
    \subfloat[]{%
      \begin{tabular}{cccccccc}
      \arrayrulecolor{black!0} 
      \toprule
      \hphantom{$\langle 4,4\rangle$}& &  & $\langle 4,4\rangle$ &  &  &  & \\
      & & $\langle 5,3\rangle$ & $\langle 5,3\rangle$ &  &  & $\langle 1,2\rangle$ & $\langle 1,2\rangle$\\
      & $\langle 6,2\rangle$ & $\langle 6,2\rangle$ & $\langle 6,2\rangle$ & $\langle 3,1\rangle$ & $\langle 2,0\rangle$ & $\langle 2,0\rangle$ & $\langle 2,0\rangle$\\
      & $\langle \n,-1\rangle$ & $\langle \n,-1\rangle$ & $\langle \n,-1\rangle$ & $\langle \n,-1\rangle$ & $\langle \n,-1\rangle$ & $\langle \n,-1\rangle$ & $\langle \n,-1\rangle$\\
      \arrayrulecolor{black} 
      \bottomrule
      $i$ & 6 & 5 & 4 & 3 & 2 & 1 & 0\\
      \end{tabular}
      \label{fig:min_stack_stack}%
    }
    \caption{The min-stack for each \emph{current} position $i$
    \protect\subref{fig:min_stack_stack} while scanning $\mathsf{A}$
    \protect\subref{fig:min_stack_array} from right to left. A tuple
    $\left(p, v\right)$ contains the position $p$ of the value $v$.
    For the current position $i$ the stack can be used to answer \RMQ{}{}s
    of the type \RMQ{i,j}{\mathsf{A}} with $j\geq i$ by looking at elements
    from the top until a position $k$ with $k\geq j$ is found.}
    \label{fig:min_stack}
\end{figure}

In addition to the min-stack, we require for each $\cZ$-bucket the position of where the last suffix has been induced from.
This is the position we look for when querying the min-stack.


\subsection{Special Cases during \LCP{} Induction} 
\label{sub:special_cases_during_lcp}
There are three special cases where the \LCP{}-value cannot be induced using the min-stack (or \RMQ{}{}s in general).
The first case occurs if a suffix is induced next to a \TypeBs-suffix.
The  inducing can happen to the left or right of the already placed \TypeBs-suffix.
The former case is easy as there cannot be an \TypeA- or \TypeB-suffix to the left of a \TypeBs-suffix in the same $\left(\cZ,\cO\right)$-bucket.
Therefore, we only need to check whether the suffixes are in the same $\cZ$-bucket to compute the \LCP{}-value for the \TypeBs-suffix, which is either $0$ or $1$.
The other case (a suffix is induced to the right of a \TypeBs-suffix) is more demanding, as the \LCP{}-value must be computed.
Fortunately, this can be done more sophisticated than by naive comparison of the suffixes.
First, we check whether both the \TypeBs-suffix $S_i$ and the \TypeB-suffix $S_j$ are in the same $\left(\cZ,\cO\right)$-bucket.
If not, the \LCP{}-value is $1$ if they occur in the same $\cZ$-bucket, and $0$ otherwise.
However, if they occur in the same $\left(\cZ,\cO\right)$-bucket, we know that $S_i$ has a prefix $\cZ\cO d$, $d\in\Sigma$, such that $\cZ\llex \cO\geqlex d$, and that $S_j$ has a prefix $\cZ\cO e$, $e\in\Sigma$, such that $\cZ\llex \cO\leqlex e$.
Hence, the \LCP{}-value is $\max \left\{ k\geq 0 \colon \Substring{i+1,i+k+2} = \Substring{j+1,j+k+2} \right\} + 1$, i.e., the first appearance of a character not equal to $\cO$ in either suffix.
In the last case (an \TypeA-suffix is induced next to a \TypeB-suffix) the \LCP{}-value can be determined in an analogous way.


\section{Experiments with LCP-Construction}
\label{sec:experiments}

We implemented the modified DivSufSort in \textsc{C} and compiled it using \texttt{gcc} version 6.2 with the compiler options \textsf{-DNDEBUG}, \textsf{-03} and \textsf{-march=native}.
Our implementation is available from \url{https://github.com/kurpicz/libdivsufsort}.
We ran all experiments on a computer equipped with an Intel Core i5-4670 processor and 16~GiB RAM, using only a single core.

We evaluated our algorithm on the \emph{Pizza \& Chili} Corpus\footnote{\url{http://pizzachili.dcc.uchile.cl/}, last seen 05.07.2017} and compared our implementation to the following \LCP{}-construction algorithms (using the same compiler options):
\emph{KLAAP}~\cite{Kasai2001} is the first linear-time \LCP{}-construction algorithm.
The $\Phi$-algorithm~\cite{Karkkainen2009a} is an alternative to KLAAP that reduces cache-misses.
\emph{Inducing+SAIS}~\cite{Fischer2011} is an \LCP{}-construction algorithm (using similar ideas as in this paper) based on SAIS~\cite{Nong2009}, and \emph{naive} scans the suffix array and checks two consecutive suffixes character by character.

We also looked at \LCP{}-construction algorithms requiring the \emph{Burrows-Wheeler transform}, i.e., \emph{GO} and \emph{GO2} by Gog and Ohlebusch~\cite{Gog2011}.
Since these algorithms are only available in the \emph{succinct data structure library} (SDSL)~\cite{Gog2014}, which has an emphasis on a low memory footprint, the running times are affected by that.

The results of our experiments can be found in Table~\ref{tab:time_evaluation}.
As a brief summary, our practical tests show that $\Phi$ (see column 1) is the fastest \LCP{}-construction algorithm if \SA{} is already given, while our new implementation (column 6) is faster than the only other inducing-based approach (last 2 columns).

\newcommand{\specialcell}[2][c]{%
\begin{tabular}[#1]{@{}c@{}}#2\end{tabular}}

\begin{table}[t]
\setlength{\tabcolsep}{4pt}
\centering
\begin{tabular}{ccccccccccccc}
\toprule
\multicolumn{2}{c}{} & \multicolumn{7}{c}{\textsf{LCP} given \textsf{SA} (and \textsf{BWT} if necessary)} & \multicolumn{2}{c}{\textsf{SA}} & \multicolumn{2}{c}{$\mathsf{SA}+\mathsf{LCP}$} \\
\cmidrule(r){3-9}
\cmidrule(r){10-11}
\cmidrule(r){12-13}
\multicolumn{2}{c}{Text} & \rot{$\Phi$~\cite{Karkkainen2009a}} & \rot{KLAAP~\cite{Kasai2001}} & \rot{naive} & \rot{GO~\cite{Gog2011}}& \rot{GO2~\cite{Gog2011}} & \rot{\specialcell{inducing\\ $\lbrack$this paper$\rbrack$}} & \rot{inducing~\cite{Fischer2011}} & \rot{DivSufSort} & \rot{SAIS~\cite{Nong2009}} & \rot{\specialcell{inducing +\\DivSufSort\\$\lbrack$this paper$\rbrack$}} & \rot{\specialcell{inducing +\\SAIS~\cite{Fischer2011}}} \\
\cmidrule[0.225ex](r){1-2}
\cmidrule[0.225ex](r){3-9}
\cmidrule[0.225ex](r){10-11}
\cmidrule[0.225ex](r){12-13}
\multirow{5}{*}{\rotatebox[origin=c]{90}{20~MB}} & dna       &\textbf{0.77}& 0.91& 1.180&6.46&2.65& 0.78& 1.12& 1.45& 1.71& 2.23& 2.83\\
                                                 &  english  &\textbf{0.61}& 0.77& 44.72&7.90&4.03& 0.64& 0.91& 1.45& 1.65& 2.09& 2.56\\
                                                 &  dblp.xml &0.54& 0.55& 1.640&2.56&3.92&\textbf{0.53}& 0.82& 1.06& 1.29& 1.59& 2.11\\
                                                 &  sources  &\textbf{0.54}& 0.57& 1.530&2.87&4.26& 0.57& 0.85& 1.07& 1.41& 1.64& 2.26\\
                                                 &  proteins &\textbf{0.60}& 0.67& 4.190&5.46&3.24& 0.66& 0.96& 1.51& 1.79& 2.17& 2.75\\
\cmidrule(r){1-2}
\cmidrule(r){3-9}
\cmidrule(r){10-11}
\cmidrule(r){12-13}
\multirow{5}{*}{\rotatebox[origin=c]{90}{50~MB}} &  dna      &\textbf{2.02}& 2.360& 3.240&16.25&14.43& 2.06& 2.96& 3.88& 4.57& 5.94& 7.53\\
                                                 &  english  &\textbf{1.70}& 2.080& 65.85&15.41&12.76& 1.88& 2.65& 3.83& 4.56& 5.71& 7.21\\
                                                 &  dblp.xml & 1.41& 1.45& 4.370&9.490&9.370&\textbf{1.39}& 2.17& 2.93& 3.53& 4.32& 5.70\\
                                                 &  sources  &\textbf{1.45}& 1.49& 6.950&10.06&10.15& 1.51& 2.26& 2.87& 3.77& 4.38& 6.03\\
                                                 &  proteins &\textbf{1.77}& 2.01& 6.560&14.38&15.74& 1.87& 2.83& 4.55& 5.27& 6.42& 8.10\\
\cmidrule(r){1-2}
\cmidrule(r){3-9}
\cmidrule(r){10-11}
\cmidrule(r){12-13}
\multirow{5}{*}{\rotatebox[origin=c]{90}{100~MB}} &  dna      &\textbf{4.11}& 4.75& 6.590&26.03&26.62& 4.24& 5.95& 8.23& 9.44& 12.47& 15.39\\
                                                  &  english  &\textbf{3.56}& 4.28& 185.9&32.57&28.09& 4.02& 5.62& 7.96& 9.49& 11.98& 15.11\\
                                                  &  dblp.xml & 2.85& 2.89& 9.040&19.91&21.49&\textbf{2.82}& 4.41& 6.19& 7.22& 9.010& 11.63\\
                                                  &  sources  &\textbf{2.93}& 3.02& 39.85&24.92&24.46& 3.07& 4.62& 5.98& 7.72& 9.050& 12.34\\
                                                  &  proteins &\textbf{3.56}& 4.09& 16.99&30.89&28.12& 3.96& 5.86& 9.91& 10.96& 13.87& 16.82\\
\cmidrule(r){1-2}
\cmidrule(r){3-9}
\cmidrule(r){10-11}
\cmidrule(r){12-13}
\multirow{5}{*}{\rotatebox[origin=c]{90}{200~MB}} &  dna      &\textbf{8.25}& 10.0& 17.36&76.11&79.02& 8.64& 12.02& 17.41& 19.18& 26.05& 31.20\\
                                                  &  english  &\textbf{7.23}& 8.70& 1070&72.58&73.75& 8.25& 11.49& 16.80& 19.39& 25.05& 30.88\\
                                                  &  dblp.xml &\textbf{5.75}& 6.28& 18.23&49.97&52.91& 5.77& 9.120& 12.99& 14.72& 18.76& 23.84\\
                                                  &  sources  &\textbf{5.98}  & 6.23& 52.60&61.61&59.01& 6.37& 9.700& 12.63& 16.01& 19.00& 25.71\\
                                                  &  proteins &\textbf{6.86}& 7.94& 42.60&78.78&77.40& 8.33& 11.82& 19.73& 21.65& 28.06& 33.47\\
\bottomrule
\end{tabular}
\caption{The first seven columns contain the times solely for the computation of \LCP{}.
Since the inducing algorithms are interleaved with the computation of \SA{}, we subtracted the time to compute \SA{} with the corresponding inducing approach (``inducing [this paper]'' and ``inducing~\cite{Fischer2011}'').
\emph{GO} and \emph{GO2} require the \textsf{BWT} in addition to \SA{}; the time to compute \textsf{BWT} is also not included.
The last two columns show the time to compute \SA{} and \LCP{} using the inducing approach.
All times are in seconds, and are the average over 21 runs on the same input.}
\label{tab:time_evaluation}
\vspace{-2em}
\end{table}


\section{Conclusions} 
\label{sec:conclusions}
We presented a detailed description of \emph{DivSufSort} that has not been available albeit its wide use in different applications.
We linked interesting approaches, e.g., the repetition detection, to the corresponding lines in the source code and to the original literature.

Compared with SAIS, the other popular suffix array construction algorithm based on inducing, DivSufSort is faster.
We ascribe this to the two main differences between DivSufSort and SAIS:
First, the sorting of the initial suffixes in SAIS (the ones that cannot be induced) is done by recursively applying the algorithm (and renaming the initial suffixes), which is slower in practice than the string-sorting and prefix doubling-like approach used by DivSufSort (which also employs techniques like repetition detection to further decrease runtime).
Second, the classification of the initial suffixes differs:
while the suffixes that have to be sorted initially in SAIS can be displaced during the inducing of the \SA{}, they are not moved again in DivSufSort.
This also allows DivSufSort to skip parts (containing only \TypeA-suffixes) of the \SA{} during the first induction phase.

In addition, we showed that the LCP-array can be computed during the inducing of the suffix array in DivSufSort.
This approach is faster than the previous known inducing LCP-construction algorithm based on SAIS~\cite{Fischer2011}, and competitive with the $\Phi$-algorithm, i.e, the fastest pure LCP-construction algorithms.


\bibliographystyle{psc}
\bibliography{lit}
\end{document}